\newif\iffinal
\newif\ifpagenumbers
  \newcommand{\mc}{\mathcal}
\setlist{topsep=3pt}
\newcommand{\gray}[1]{{\color{black!70} #1}}	
\newcommand{\ak}[1]{\iffinal\else\gray{\noindent\scriptsize// #1}\fi}
\newcommand{\mailto}[1]{\href{mailto:#1}{\nolinkurl{#1}}}
\newcommand{\ack}[1]{\smallskip\noindent\small{\textbf{Acknowledgements.} #1}}
\newtheorem{obs}{Observation}
\newtheorem{cor}{Corollary}
\newcommand{\parbf}[1]{\smallskip\noindent {\bf #1.}}
\NewDocumentCommand \todo{g} { \IfValueTF{#1} {{\color{red}{\scriptsize \textbf{/#1/}}}} {{\color{red}{\small \tt todo}}} }
\newcommand\li{\begin{itemize}}
\newcommand\il{\end{itemize}}
\renewcommand{\-}{\item}
\newcommand\lo{\begin{enumerate}}
\newcommand\ol{\end{enumerate}}
\let\emptyset\varnothing
\newcommand\from\leftarrow
\newcommand\x\times
\newcommand\xx{\!\times\!}
\newcommand{\bbN}{\mathbb{N}}
\newcommand{\bbB}{\mathbb{B}}
\renewcommand{\|}{\ensuremath{\mid}} 
\newcommand{\cupdot}{\mathbin{\dot{\cup}}} 
\renewcommand{\iff}{\leftrightarrow}
\newcommand{\Iff}{\Leftrightarrow}
\newcommand{\impl}{\rightarrow}
\newcommand{\Impl}{\Rightarrow}
\newcommand{\lblTo}[1]{\stackrel{{#1}}{\longrightarrow}}
\newcommand{\lblToS}[1]{\stackrel{{#1}}{\rightarrow}}
\renewcommand{\true}{\textit{true}\xspace}
\renewcommand{\false}{\textit{false}\xspace}
\newcommand{\specialcellC}[2][c]{%
  \begin{tabular}[#1]{@{}c@{}}#2\end{tabular}}
\newcommand{\specialcellL}[2][c]{%
  \begin{tabular}[#1]{@{}l@{}}#2\end{tabular}}
\newcommand\G\always\xspace
\newcommand\F\eventually\xspace
\newcommand\W\weakuntil\xspace
\newcommand\U\until\xspace
\newcommand\R\releases\xspace
\newcommand\X\nextt\xspace
\newcommand\XX{\ensuremath{\nextt\!\nextt\xspace}}
\newcommand{\tpl}[1]{\left<#1\right>}
\NewDocumentCommand \UCWR{} {universal co-B\"uchi register automaton\xspace}
\NewDocumentCommand \NBWR{} {non-deterministic B\"uchi register automaton\xspace}
\NewDocumentCommand \gir{g} { \IfValueTF{#1} {{g_{\mc{ir}_{#1}}}} {{g_\mc{ir}}} }
\NewDocumentCommand \giv{g} { \IfValueTF{#1} {{g_{\mc{iv}_{#1}}}} {{g_\mc{iv}}} }
\NewDocumentCommand \gov{g} { \IfValueTF{#1} {{g_{{\mc ov}_{#1}}}} {{g_{\mc ov}}} }
\NewDocumentCommand \sr{g} { \IfValueTF{#1} {{a_{\mc r_{#1}}}} {{a_{\mc r}}} }
\NewDocumentCommand \sv{g} { \IfValueTF{#1} {{a_{\mc v_{#1}}}} {{a_{\mc v}}} }
\NewDocumentCommand \RTtuple{} {\tpl{I,O,\mc I, \mc O, \mc R, \mc d_0, S, s_0, \tau}}
\NewDocumentCommand \RAtuple{} {\tpl{P, \mc P, \mc R, \mc d_0, Q, q_0, \delta, F}}
\NewDocumentCommand \RAtuplePrime{} {\tpl{P, \mc P, \mc R, \mc d_0, Q', q'_0, \delta', F'}}
\NewDocumentCommand \RAtupleA{} {\tpl{P^A, \mc P, \mc R^A, \mc d_0, Q^A, q^A_0, \delta^A, F^A}}
\NewDocumentCommand \TALLtuple{} {\tpl{P^T, \mc P, \mc R^T, \mc d_0, Q^T, q^T_0, \delta^T, F^T}}
\NewDocumentCommand \VAtuple{} {\tpl{P, \mc P, \mc R, Q, q_0, \delta, F, E}}
\newcommand\ATV{AT_\bbB @ V_k}
\newcommand\ATW{AT_\bbB @ W}
\newcommand\hideATW{\textit{hide}_A(\ATW)}
\newcommand\Tall{T^\textit{\!all}}
\newcommand\AT{A \ox \Tall}
\newcommand\Asgn{\textit{Asgn}}
\DeclareMathOperator*{\ox}{\text{\raisebox{0.25ex}{\scalebox{0.7}{$\otimes$}}}}
\tikzstyle{none}=[inner sep=0pt]
\tikzset{initial text={}}
\tikzstyle{rn}=[circle,fill=White,draw=Red,minimum size=0.5cm,inner sep=0pt]
\tikzstyle{gn}=[circle,fill=White,draw=Green,minimum size=0.5cm,inner sep=0pt]
\tikzstyle{yn}=[circle,fill=White,draw=Yellow,minimum size=0.5cm,inner sep=0pt]
\tikzstyle{wn}=[circle,fill=White,draw=Black,minimum size=0.5cm,inner sep=0pt]
\tikzstyle{invisible}=[circle,fill=White,draw=White,inner sep=0pt]
\tikzstyle{textual}=[rectangle,font=\footnotesize]
\tikzstyle{dot}=[circle,fill=Black,draw=Black,inner sep=0pt,minimum size=0.5mm]
\tikzstyle{uptriangle}=[regular polygon,regular polygon sides=3,shape border rotate=0,fill=Black,draw=Black,inner sep=0pt,minimum size=1mm]
\tikzstyle{text rounded}=[rectangle,rounded corners=2.2mm,fill=White,draw=Black]
\tikzstyle{text rectangle}=[rectangle,fill=White,draw=Black]
\tikzstyle{text ellipse}=[ellipse,fill=White,draw=Black,inner sep=1pt]
\tikzstyle{simple}=[-,draw=Black]
\tikzstyle{arrow}=[->,>={Stealth},draw=Black]
\tikzstyle{dashed arrow}=[->,draw=Black,dashed]
\tikzstyle{tick}=[-,draw=Black,postaction={decorate},decoration={markings,mark=at position .5 with {\draw (0,-0.1) -- (0,0.1);}}]
\tikzstyle{ga}=[->,draw=Green]
\tikzstyle{pa}=[->,draw=DeepPink]
\tikzstyle{red}=[-,draw=Red]
\tikzstyle{pink}=[-,draw=DeepPink]
\tikzstyle{blue}=[-,draw=Blue]
\tikzstyle{green}=[-,draw=Green]
\tikzstyle{yellow}=[-,draw=DarkGoldenrod]
\title{Bounded Synthesis of Register Transducers \texorpdfstring{\vspace{-3mm}}{}}
\author{%
  Ayrat Khalimov$^1$,
  Benedikt Maderbacher$^2$,
  Roderick Bloem$^2$%
  \institute{%
    \vspace{-2mm}
    $^{2}$ Graz University of Technology, Austria\\
    $^{1}$ Hebrew University, Israel%
  }
}
\begin{document}
\footnotetext[0]{This is a full version of our ATVA'18 paper.}

\ifpagenumbers
  \pagestyle{plain}
\fi

\setlength\abovedisplayskip{4pt}
\setlength\belowdisplayskip{4pt}

\setcounter{tocdepth}{3}

\maketitle
\vspace{-7mm}

\begin{abstract}
  Reactive synthesis aims at automatic construction of systems from their behavioural specifications.
  The research mostly focuses on synthesis of systems dealing with Boolean signals.
  But real-life systems are often described using bit-vectors, integers, etc.
  Bit-blasting would make such systems unreadable, hit synthesis scalability, and
  is not possible for infinite data-domains.
  One step closer to real-life systems are register transducers~\cite{Kaminski}:
  they can store data-input into registers and later output the content of a register,
  but they do not directly depend on the data-input, only on its comparison with the registers.
  Previously~\cite{ehlers-register-machines} it was proven that synthesis of register transducers
  from register automata is undecidable,
  but there the authors considered transducers equipped with the unbounded queue of registers.
  First, we prove the problem becomes decidable if bound the number of registers in transducers,
  by reducing the problem to standard synthesis of Boolean systems.
  Second, we show how to use quantified temporal logic, instead of automata, for specifications.
\end{abstract}
\vspace{-8mm}

\section{Introduction} \label{sec:intro}
 
  Reactive synthesis~\cite{Church63} frees hardware and software developers
  from tedious and error-prune coding work.
  Instead, the developer specifies the desired behaviour of a system,
  and a synthesizer produces the actual code.
  The research in reactive synthesis is mostly focused on synthesis of transducers
  dealing with \emph{Boolean} inputs and outputs.
  However,
  most programs and hardware designs use not only Booleans,
  but also bit-vectors, integers, reals.
  Bit-blasting into Booleans makes synthesized programs unreadable and hinders the synthesis scalability.
  
  One step closer to real-life systems are register transducers~\cite{Kaminski}.
  Such transducers are equipped with registers;
  they can read the data-input from an infinite domain;
  they can store the data-input into a register and later output it;
  they do not depend on the exact data-input value, but on its comparison with the registers.
  Thus, a transition of a register transducer can say
  ``in state $q$:
    if the data-input not equals to register \#1, then
    output the value of register \#1, store the data-input into register \#2, and go into state $q'$''.
  Examples of a register transducer and automaton are in Figures~\ref{fig:rex:machine} and \ref{fig:rex:automaton}.
  
  In \cite{ehlers-register-machines}, the authors introduced the problem of synthesis of register transducers.
  But their transducers are equipped with an \emph{unbounded queue} of registers:
  they can push the data-input into the queue, and later compare the data-input with the values in the queue.
  For specifications, the authors use register automata with a fixed number of registers (thus, no queue).
  The authors show that the synthesis problem is undecidable;
  the proof relies on unboundedness of the queue.
  
  We prove the problem becomes decidable if bound the number of registers in transducers.
  Namely, we reduce synthesis of $k$-register transducers wrt.\ register automata
  to synthesis of Boolean transducers wrt.\ Boolean automata,
  i.e., to standard synthesis.
  The reduction relies on two ideas.
 
  The first (folklore) idea is:
  instead of tracking the exact register values and data-inputs,
  track only the \emph{equivalences} between register values and the data-input.
  The second idea is:
  instead of checking automaton non-emptiness,
  we check automaton non-emptiness \emph{modulo words of $k$-register transducers}.
  Every such word can be enhanced with assignment actions of the transducer that resulted in producing the word.
 
 
  In the second part,
  we suggest a temporal logic that ``works well'' with our approach.
  Among several logics suitable to the context of infinite data~\cite{Wolper-data-indep,grumberg2012model,LTLfreeze,Demri:2012:TLR:2400051.2400054},
  we have chosen IPTL~\cite{Wolper-data-indep} (called VLTL in \cite{grumberg2012model}),
  because of its naturalness.
  Using this logic, we can state properties like
  $\forall \mc d \in \mc D: \G(\mc i = \mc d \impl \F (\mc o = \mc d))$:
  ``every data-value appearing on the input eventually appears on the output''.
  We show how to convert a formula in this logic into a register automaton (in incomplete way; there can be no complete way)
  that can be used by our synthesis approach.

\section{Definitions} \label{sec:defs}

  Fix a \emph{data-domain} $\mc D$ throughout the paper,
  which is an infinite set of elements (\emph{data-values}).
  Calligraphic writing like $\mc i$, $\mc o$, $\mc d$, $\mc r$ denotes data-variables or objects closely related to them.
  Sets of such objects are also written in calligraphic, like $\mc D$, $\mc R$, $\mc P$, etc.
  Define $\bbN = \{1,2,...\}$, $\bbN_0 = \{0,1,2,...\}$, $[k] = \{1,...,k\}$ for $k \in \bbN$;
  $\bbB = \{\true, \false\}$, and we often use the subscripted variants, $\bbB_\mc i = \bbB_\mc o = \bbB$,
  to clarify when $\bbB$ is related to object $\mc i$ or $\mc o$.
  For an automaton $A$, let $L(A)$ denote the set of its accepting words.

  \subsection{Register Automata}

    A register automaton works on words from $(2^P \x \mc D^{\mc P})^\omega$,
    where $P$ is a set of Boolean signals and $\mc P$ is a set of data-signals.
    To simplify the presentation, we assume there are only two data-signals ($\mc P = \{\mc i, \mc o\}$),
    which makes the words to be from $(2^{P} \x \mc D^2)^\omega$.
    When reading a word,
    a register automaton can store the value of data-signal $\mc i$ into its registers.
    Later it can compare the content of its registers with the current value of $\mc i$.
    Register automata do not depend on actual data-values---only on the comparison with the register values.
    Below is a formal definition.

    A \emph{(universal co-B\"uchi/non-deterministic B\"uchi) word automaton with $k$ registers} is a tuple $A = \RAtuple$, where
    \li
    \- $P$ is a set of \emph{Boolean signals};

    \- $\mc P = \{\mc i, \mc o\}$ is a set of \emph{data-signals};

    \- $\mc R = \{\mc r_1, ..., \mc r_k\}$ is a set of \emph{registers};

    \- $\mc d_0 \in \mc D$ is an \emph{initial data-value} for every register;

    \- $Q$ is the set of \emph{states} and $q_0 \in Q$ is an \emph{initial state};

    \- $F \subseteq Q$ is a set of \emph{accepting states};

    \- $\delta:
        Q \x
        2^{P} \x 
        \bbB_\mc i^k \x
        \bbB_\mc o^k
        \to
        2^{\bbB^k \!\x Q}$
       is a \emph{transition function}.
       Intuitively,
       in a state, an automaton reads
       a finite letter from $2^P$ (which describes all Boolean signals whose current value is true)
       and a data-letter from $D^2$ (a data-value for $\mc i$ and a data-value for $\mc o$).
       Then the automaton compares the data-letter with the content of the registers.
       Depending on this comparison (component $\bbB_\mc i^k \x \bbB_\mc o^k$, called \emph{guard}),
       the automaton transits into several (for universal automaton) or one of (for non-deterministic automaton) successor states,
       and for each successor state,
       stores the value of data-signal $\mc i$ into one, several, or none of the registers
       (defined by component $\bbB^k$, called \emph{assignment} or \emph{store}).
    \il
    An example of a register automaton is in Figure~\ref{fig:rex:automaton}.

    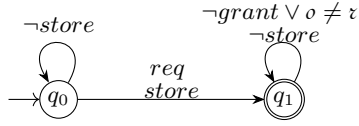
\begin{figure}[tb]
      \centering
      \begin{tikzpicture}
	\begin{pgfonlayer}{nodelayer}
		\node [initial, style=wn] (0) at (-1, 1.25) {$q_0$};
		\node [style=wn, double] (1) at (2, 1.25) {$q_1$};
		\node [style=textual] (2) at (-1, 2.25) {$\neg store$};
		\node [style=textual] (3) at (0.5, 1.5) {\specialcellC{$req$ \vspace{-0.8mm}\\ $store$}};
		\node [style=textual] (4) at (2, 2.25) {\specialcellC{$\neg grant \lor \mc o \neq \mc r$  \vspace{-0.8mm} \\ $\neg store$ }};
	\end{pgfonlayer}
	\begin{pgfonlayer}{edgelayer}
		\draw [style=arrow] (0) to (1);
		\draw [style=arrow, in=120, out=60, loop] (1) to node[double]{} ();
		\draw [style=arrow, in=120, out=60, loop] (0) to ();
	\end{pgfonlayer}
\end{tikzpicture}
      \vspace{-3mm}
      \caption{A universal co-B\"uchi 1-register automaton:
        $P=\{req,grant\}$, $\mc R = \{\mc r\}$, $F = \{q_1\}$.
        The labels $\neg store$ and $store$ have a special meaning:
        $store$ means that the automaton stores the value of data-input $\mc i$ into register $\mc r$;
        $\neg store$ means it does not.
        The expression $\mc o \neq \mc r$ means that the component $\bbB_\mc o$ of the transition is $\false$.
        For guards and Boolean signals, the labeling is symbolic.
        Formally, the set of transitions is
        $
        \big\{(q_0, p, b_\mc i, b_\mc o, \false, q_0): (b_\mc i,b_\mc o) \in \bbB^2, p \in 2^P\big\}
        \cup
        \big\{(q_0, p, b_\mc i, b_\mc o, \true, q_1): (b_\mc i,b_\mc o) \in \bbB^2, req \in p \in 2^P\big\}
        \cup
        \big\{(q_1, p, b_\mc i, b_\mc o, \false, q_1): (b_\mc i,b_\mc o) \in \bbB^2, p \in 2^P, grant \not\in p \lor b_\mc o=\false\big\}$.%
      }
      \label{fig:rex:automaton}
    \end{figure}

    A \emph{configuration} is a tuple $(q, \mc {\bar d}) \in Q \x \mc D^k$,
    and $(q_0, \mc d_0^k)$ is \emph{initial}.
    A \emph{path} \label{page:atm-path} is an infinite sequence
    $(q_0, \mc {\bar d}_0) \lblTo{(l_0, \mc i_0, \mc o_0, \bar a_0)} (q_1, \mc {\bar d}_1) \lblTo{(l_1, \mc i_1, \mc o_1, \bar a_1)}...$
    such that for every $j \in \bbN_0$:
    \li
    \- $q_j \in Q$, $\mc {\bar d}_j \in \mc D^k$, $l_j \in 2^P$, $\mc i_j \in \mc D$, $\mc o_j \in \mc D$, and $\bar a_j \in \bbB^k$;

    \- $(q_{j+1}, \bar a_j) \in
        \delta\big(q_j, l_j,
          \mc i_j = \mc {\bar d}_j[1], ..., \mc i_j = \mc {\bar d}_j[k],
          \mc o_j = \mc {\bar d}_j[1], ..., \mc o_j = \mc {\bar d}_j[k]\big)$;

    \- $\mc {\bar d}_0 = \mc d_0^k$; and

    \- \vspace{-3mm} for every $n \in [k]$:
       $\mc {\bar d}_{j+1}[n] =
       \begin{cases}
         \mc i_j &\textit{if } \bar a_j[n] = \true,\\
         \mc {\bar d}_j[n] &\textit{otherwise}.
       \end{cases}$
    \il
    
    \noindent
    Let $\Sigma = 2^{P} \x \mc D^2$.
    A \emph{word} is a sequence from $\Sigma^\omega$.
    A word is \emph{accepted} by a \UCWR iff every path%
    ---whose projection into $\Sigma$ equals to the word---%
    does not visit a state from $F$ infinitely often;
    otherwise the word is \emph{rejected}.
    A word is \emph{accepted} by a \NBWR iff there is a path%
    ---whose projection into $\Sigma$ equals to the word---%
    that visits a state from $F$ infinitely often;
    otherwise the word is \emph{rejected}.
    For example,
    the \UCWR in Figure~\ref{fig:rex:automaton} accepts the word \\
    $(\{req\} , 5_\mc i,*_\mc o) (\{req, grant\}, 6_\mc i,5_\mc o) (\{grant\}, *_\mc i,6_\mc o) (\emptyset, *_\mc i, *_\mc o)^\omega$,
    where $\mc D = \bbN_0$, we write subscripts $\mc i$ and $\mc o$ for clarity,
    and $*$ is anything from $\mc D$ (not necessary the same).
    The automaton describes the words where every $req$
    is followed by $grant$ with the data-value of $\mc o$ being equal to the data-value of $\mc i$ at the moment of the request.
    Such words can be described by a formula $\forall \mc d \in \mc D: \G\big(req \land \mc i = \mc d \impl \X\F (grant \land \mc o = \mc d)\big)$,
    but we postpone the discussion of logic until Section~\ref{sec:logic}.

  \subsection{Register Transducers}

    Register transducers is an extension of standard transducers (Mealy machines) to an infinite domain.
    A register transducer can store the input data-value into its registers.
    It can only output the data-value that is currently stored in one of its registers.
    Similarly to register automata,
    the transitions of register transducers depend on the comparison of the data-input with the registers,
    but not on the actual data-values.
    Let us define register transducers formally.
 
    A \emph{$k$-register transducer} is a tuple $T = \RTtuple$ where:
    \li
    \- $I$ and $O$ are sets of Boolean signals, called \emph{Boolean inputs} and \emph{outputs};

    \- $\mc I$ and $\mc O$ are sets of data-signals, called \emph{data-inputs} and \emph{data-outputs};
       we assume that $\mc I = \{\mc i\}$ and $\mc O = \{\mc o\}$.

    \- $S$ is a (finite or infinite) set of \emph{states} and $s_0\in S$ is \emph{initial};

    \- $\mc R = \{\mc r_1,...,\mc r_k\}$ is a set of \emph{registers};

    \- $\mc d_0 \in \mc D$ is an \emph{initial data-value} for every register;

    \- $\tau: S \x 2^I \x \bbB_\mc i^k  \to  (2^O \x \mc [k] \x \bbB^k \x S)$ is a \emph{transition function}.
       Intuitively, from a state the transducer reads
       the values of the Boolean inputs (component $2^I$) and
       compares the content of the registers with the data-value of $\mc i$
       (component $\bbB^k_\mc i$, called \emph{guard}).
       Depending on that information,
       the transducer transits into a unique successor state (component $S$),
       stores the data-value of $\mc i$ into one, several, or none of the registers (component $\bbB^k$, called \emph{assignment} or \emph{store}),
       outputs a value for each Boolean output (component $2^O$),
       and outputs a data-value stored in one of the registers (component $[k]$).

    \il
    Figure~\ref{fig:rex:machine} shows an example of a register transducer.

    \begin{figure}[bt]
      \centering
      \begin{tikzpicture}
	\begin{pgfonlayer}{nodelayer}
		\node [style=wn, initial] (0) at (-1, 1.25) {$s_0$};
		\node [style=wn] (1) at (2, 1.25) {$s_1$};
		\node [yshift=1.5mm, style=textual] (2) at (-2.25, 1.75) {\specialcellC{$\neg req/\neg grant$  \vspace{-0.8mm} \\ $\neg store$ }};
		\node [style=textual] (3) at (0.5, 1.75) {\specialcellC{$req/\neg grant$\vspace{-0.5mm} \\ $store$}};
		\node [style=textual] (4) at (3, 1.75) {\specialcellC{$req/grant$ \vspace{-0.5mm} \\ $store$}};
		\node [yshift=0mm, style=textual] (5) at (0.5, 0.75) {\specialcellC{$\neg req/grant$   \vspace{-0.8mm}  \\ $\neg store$}};
	\end{pgfonlayer}
	\begin{pgfonlayer}{edgelayer}
		\draw [style=arrow, bend left=15, looseness=0.50] (0) to (1);
		\draw [style=arrow, in=120, out=60, loop] (1) to node{} ();
		\draw [style=arrow, in=120, out=60, loop] (0) to ();
		\draw [style=arrow, bend left=15, looseness=0.50] (1) to (0);
	\end{pgfonlayer}
\end{tikzpicture}
      \vspace{-3mm}
      \caption{A 1-register transducer: $I = \{req\}, O = \{grant\}$, $\mc R = \{\mc r\}$.
        The meaning of $store$ and $\neg store$ is as in the previous figure.
        The labeling wrt.\ guards and Boolean signals is symbolic.
        The transducer always outputs the value of its only register (not shown).
        Formally, the set of transitions is
        $
        \big\{(s_0, \emptyset, b_\mc i, \emptyset, 1, \false, s_0): b_\mc i \in \bbB \big\}
        \cup
        \big\{(s_0, \{req\}, b_\mc i, \emptyset, 1, \true, s_1): b_\mc i \in \bbB \big\}
        \cup
        \big\{\!(s_1, \{req\!\}, b_\mc i, \{grant\}, 1, \true, s_1)\!\!:\!\!b_\mc i\!\in\!\bbB \big\}\!
        \!\cup\!
        \big\{\!(s_1, \emptyset, b_\mc i, \{grant\}, 1, \false, s_0)\!:\!b_\mc i\!\in\!\bbB \big\}
        $.%
        }
      \label{fig:rex:machine}
    \end{figure}
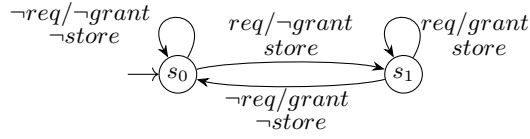

    A \emph{configuration} is a tuple $(s, \mc {\bar d}) \in Q \x \mc D^k$;
    $(s_0,\mc d_0^k)$ is called \emph{initial}.
    A \emph{path} is a sequence
    $(s_0,\mc {\bar d}_0) \lblTo{(i_0,o_0,\mc i_0,\mc o_0, \bar a_0)} (s_1,\mc {\bar d}_1) \lblTo{(i_1,o_1,\mc i_1,\mc o_1, \bar a_1)} ...$ 
    where for every $j \in \bbN_0$:
    \li
    \- $s_j \in S$, $\mc {\bar d}_j \in \mc D^k$, $i_j \in 2^I$, $o_j \in 2^O$, $\mc i_j \in \mc D$, $\mc o_j \in \mc D$, $\bar a_j \in \bbB^k$;

    \- let $(out, \mc{out}, store, succ) = \tau(s_j, i_j, \mc i_j = \mc {\bar d}_j[1],  ..., \mc i_j = \mc {\bar d}_j[k])$.
       Then:

    \- $s_{j+1} = succ$;

    \- $\bar a_j = store$;

    \- $\mc o_j = \mc {\bar d}_j[\mc{out}]$;

    \- $o_j = out$;

    \- $\mc {\bar d}_0 = \mc d_0^k$; and
    
    \- \vspace{-3mm} for every $n \in [k]$:
       $
       \mc {\bar d}_{j+1}[n] =
       \begin{cases}
         \mc i_j &\textit{if } \bar a_j[n] = \true,\\
         \mc {\bar d}_j[n] &\textit{otherwise}.
       \end{cases}
       $
    \il
    Notice that a value of the data-output refers to the current register values, not the updated ones.
    I.e., outputting a data-value happens before storing.

    For example, a path of the register transducer in Figure~\ref{fig:rex:machine} can start with
    $
    (s_0, 0)\! \lblTo{(\{req\}, \emptyset, 5_\mc i, 0_\mc o, \true)}\!
    (s_1, 5)\! \lblTo{(\{req\}, \{grant\}, 6_\mc i, 5_\mc o, \true)}\!
    (s_1, 6)\! \lblTo{(\emptyset, \{grant\}, 4_\mc i, 6_\mc o, \false)}\!
    (s_0, 6),
    $
    where we assumed that $\mc D = \bbN_0$, $\mc d_0 = 0$, and the subscripts $\mc i$ and $\mc o$ are for clarity.

    A \emph{word} is a projection of a transducer path into $2^{I\cup O} \x \mc D^2$.
    A register transducer \emph{satisfies} a register automaton $A$, written $T \models A$,
    iff all transducer words are accepted by the automaton.
    For example, the register transducer from Figure~\ref{fig:rex:machine} satisfies the automaton from Figure~\ref{fig:rex:automaton}.

  \subsection{Synthesis Problem}

    In this section, we define the model checking problem, bounded, and unbounded-but-finite synthesis problems.
    All the problems take as input a universal register automaton:
    one argument in favour of universal rather than non-deterministic automata is that the property
    ``every data-request is eventually data-granted''%
    can be expressed with a universal automaton, but not with a nondeterministic automaton.

    \parbf{Model checking and cutoffs}
    The \emph{model-checking problem} is:

    \li
    \- Given: a register transducer $T$, a universal co-B\"uchi register automaton $A$.
    \- Return: ``yes'' if $T \models A$, otherwise ``no''.
    \il
    \label{page:cutoff}
    The model-checking problem is decidable, which follows from the following.
    Kaminski and Francez~\cite[Prop.4]{Kaminski} proved the following \emph{cutoff result} (adapted to our notions):
    if a data-word over an infinite domain $\mc D$ is accepted by a non-deterministic B\"uchi $k$-register automaton,
    then there is an accepting data-word over a finite domain $\mc D_{k+1}$ of size $k+1$.
    (Actually, their result is for words of finite length, but can be extended to infinite words.)
    Further, if we look at a given universal co-B\"uchi $k_A$-register automaton $A$ as being non-deterministic B\"uchi $\widetilde{A}$,
    then $L(\widetilde{A}) = \overline{L(A)}$, i.e., it describes the error words.
    To do model checking, as usual,
    (1) build the product of the $\widetilde{A}$ and a given $k_T$-register transducer $T$, then
    (2) check its emptiness and return ``the transducer is correct'' iff the product is empty.
    The product is easy to build, this is an easy extension of the standard product construction,
    we note only that it is a non-deterministic B\"uchi $(k_A+k_T)$-register automaton.
    Finally, to check emptiness of the product we can use the cutoff result,
    namely, restrict the data-domain to have $(k_A+k_T+1)$ data-values.
    This reduces product emptiness to standard emptiness of register-less automata.

    The case of deterministic Rabin register automata and transducers with more than single data-input and data-output
    was studied in~\cite{10.1007/3-540-44618-4_41}, but the proof idea is similar.
    \ak{add more pointers?}

    In this paper we focus on the synthesis problem defined below.

    \parbf{Synthesis}
    The \emph{bounded synthesis problem} is:
    \li
    \- Given:
       a register-transducer interface
       (the number of registers $k_T$,
        Boolean and data-inputs,
        Boolean and data-outputs), a universal co-B\"uchi register automaton $A$.
    \- Return: a $k_T$-register transducer $T$ of a given interface such that $T \models A$,
               otherwise ``unrealizable''.
    \il
    If the number of registers $k_T$ is not given
    (thus we ask to find any such $k_T$ which makes the problem realizable, or return ``unrealizable'' if no such $k_T$ exists),
    then we get the (finite but unbounded) \emph{synthesis problem}.

    A related synthesis problem (let us call it ``infinite synthesis problem'') was studied in~\cite{ehlers-register-machines},
    but for a slightly different model of register transducers.
    There, the transducers operate an unbounded queue of registers (thus, it may use an infinite number of registers).
    They prove the infinite synthesis problem is undecidable
    and suggest an incomplete synthesis approach.\ak{how does it work?}

    In the next sections,
    we show that the bounded synthesis problem is decidable,
    and suggest an approach that reduces it to the synthesis problem of register-less transducers wrt.\ register-less automata.
    The (unbounded) synthesis problem is left open.

    But before proceeding to our solution,
    let us remark why the cutoff result does not immediately give a complete synthesis procedure.
    
    \begin{remark}[Cutoffs and synthesis]
      The cutoff result makes the data-domain finite,
      so let the values of the registers be part of the transducer states.
      Then a transducer has to satisfy the three conditions below,
      where condition (3) explains why the cutoff does not work with this naive approach.
      \li
      \-[(1)] ``The register values are updated according to transducer store actions.''

        Introduce new Boolean outputs describing the current values of the transducer registers,
        and new Boolean outputs describing the store action.
        Then it is easy to encode the above requirement using a register-less automaton.

      \-[(2)] ``The value of the data-output always equals the value of one of the registers.''

        With the Boolean outputs introduced in item (1),
        this can be easily encoded using a register-less automaton.
      
      \-[(3)] ``The transitions depend on the guard, but not on the value of data-input.''

        When considered alone, this requirement can be implemented using the partial-information synthesis approach~\cite{KV97c},
        where we search for a transducer that can access the guard, but not the actual value of data-input.
        But the partial-information synthesis approach does not allow for having partial information for transitions (needed to implement item (3)),
        yet full information for outputs (needed to implement items (1) and (2)).
      \il
      Nevertheless, with the cutoff it is easy to get an \emph{incomplete} synthesis approach with SMT-based bounded synthesis~\cite{BS}
      that allows you to fine-tune transition and output functions dependencies.
    \end{remark}

\section{Solving the Bounded Synthesis Problem} \label{sec:synthesis-solution}

  Our approach is 5 points long.

  (\textbf 1) We start by defining a Boolean associate $A_\bbB$ of a \UCWR $A$,
  which is a standard register-less universal co-B\"uchi automaton derived from the description of $A$.
  Of course, we cannot directly use the Boolean associate $A_\bbB$ to answer questions about $A$,
  because $A_\bbB$ lacks the semantics of $A$.
  We also define a Boolean associate $T_\bbB$ for every register transducer $T$.
  In the end, we will synthesize $T_\bbB$ that satisfies a certain register-less automaton.
  For examples of such associates,
  look at the automaton and transducer on Figures~\ref{fig:rex:automaton}~and~\ref{fig:rex:machine}
  as being standard, register-less, where $store$ is a Boolean signal and has no special meaning.
  (\textbf 2) We introduce a verifier automaton $V$,
  which tracks the equivalences between the registers $\mc R^A$ of $A$:
  two registers fall into the same equivalence class iff they hold the same data-value.
  The automaton $A_\bbB @ V$ is $A_\bbB$ enhanced with this equivalence-class information.
  It has enough information to answer the questions like ``does $A$ have a \emph{rejecting} word?'' and model checking wrt.\ $A$.
  This is because every Boolean path of $A_\bbB @ V$ corresponds to some data-path in $A$, and vice versa
  (which was not the case for $A_\bbB$ and $A$).
  But $A_\bbB @ V$ is not suited for synthesis%
  ---we cannot synthesize from $A_\bbB @ V$---%
  for one of the two reasons:
  either we would have to allow the transducers to control the store actions of $A$, which brings unsoundness,
  or we would have to allow the environment to provide the input guards that do not correspond to any data-value,
  which brings incompleteness.
  (\textbf 3) We add $k_T$ fresh registers $\mc R^T$ to $A$ that will be controlled by a transducer.
  To this end, we define the automaton $\Tall$:
  it reads data-words \emph{enhanced with store information of a transducer},
  and filters out data-words that do not belong to any of the $k_T$-register transducers
  (e.g., data-words that have a value for $\mc o$ that was not seen before on $\mc i$).
  We define $A \ox \Tall$, whose language is $L(A) \cap L(\Tall)$\footnotemark[1].
  (\textbf 4) We enhance the Boolean associate $(A \ox \Tall)_\bbB$ of $A \ox \Tall$ with information about equivalences between
  the registers $\mc R^T$ \emph{and} $\mc R^A$;
  the resulting automaton is called $(A \ox \Tall)_\bbB @ W$,
  where $W$ is a verifier similar to $V$ but tailored towards synthesis.
  (\textbf 5) Finally, we hide the information that should not be visible to a transducer,
  namely information related to the automaton registers $\mc R^A$.
  The resulting automaton is called $H = hide_A((A \ox \Tall)_\bbB @ W)$ and it is such that
  $\exists T: T \models A$ iff $\exists T_\bbB: T_\bbB \models H$.
  Furthermore, $H$, when viewed as a register automaton, is determinizable, and $L(H) \subseteq L(A)$\footnotemark[1].
  \footnotetext[1]{Actually, their alphabets differ, so this statement assumes $A$ with extended alphabet.}

  \subsection{Boolean Associates of Register Automata and Transducers} \label{sec:associates}

    The transition functions of $k$-register automata do not contain any infinite objects---data-values appear only in the semantics.
    Let us define Boolean associates of register automata and transducers.

    Given a $k$-register automaton $A = \RAtuple$,
    let \emph{Boolean automaton} $A_\bbB = \tpl{P_\bbB, Q, q_0, \delta_\bbB, F}$ be a standard register-less automaton
    where:
    \li
    \- let
       $G_\mc i = \{g_{\mc{ir}_1}, ..., g_{\mc{ir}_k}\}$,
       $G_\mc o = \{g_{\mc{or}_1}, ..., g_{\mc{or}_k}\}$,
       $Asgn = \{a_{\mc{r}_1}, ..., a_{\mc{r}_k}\}$.
       Then:

    \- $P_\bbB = P \cup G_\mc i \cup G_\mc o \cup Asgn$,

    \- $\delta_\bbB: Q \x 2^{P_\bbB} \to 2^Q$ contains
       $(q, l\cup g_\mc i \cup g_\mc o \cup a, q') \in \delta_\bbB$ iff $(q, l, \bar b_\mc i, \bar b_\mc o, \bar a, q') \in \delta$,
       where
       $l \in 2^P$, $g_\mc i \in 2^{G_\mc i}$, $g_\mc o \in 2^{G_\mc o}$, $a \in 2^{Asgn}$,
       $\bar b_\mc i = (g_{\mc{ir}_1} \in g_\mc i, ..., g_{\mc{ir}_k} \in g_\mc i) \in \bbB^k$,
       $\bar b_\mc o = (g_{\mc{or}_1} \in g_\mc o, ..., g_{\mc{or}_k} \in g_\mc o) \in \bbB^k$,
       $\bar a = (a_{\mc{r}_1} \in a, ..., a_{\mc{r}_k} \in a) \in \bbB^k$.
       Informally, we take the assignment component (on the right side) of $\delta$ and move it to the left side of $\delta_\bbB$,
       and introduce new Boolean signals to describe the Boolean components.
    \il
    For convenience,
    we say that a letter $g_\mc i \in 2^{G_\mc i}$
    \emph{encodes} the guard
    $(g_{\mc{ir}_1} \in g_\mc i, ..., g_{\mc{ir}_k} \in g_\mc i) \in \bbB^k$,
    and vice versa;
    similarly for a letter from $2^{G_\mc o}$ and $2^{\Asgn}$.

    A \emph{Boolean path} is an infinite sequence
    $q_0 \lblTo{l_0 \cup g_\mc i{}_0 \cup g_\mc o{}_0 \cup a_0} q_1 \lblTo{l_1\cup g_\mc i{}_1\cup g_\mc o{}_1\cup a_1} ...$
    from $(Q \x 2^{P_\bbB})^\omega$ that satisfies $\delta_\bbB$.
    When necessary to distinguish paths of register automata (which are in $(Q \x \mc D^k \x 2^P \x \mc D^2)^\omega$)
    from Boolean paths,
    we call the former \emph{data-paths}.
    A data-path $(q_0, \bar{\mc d}_0) \lblTo{(l_0, \mc i_0, \mc o_0, \bar a_0)} (q_1, \bar{\mc d}_1) \lblTo{(l_1, \mc i_1, \mc o_1, \bar a_1)}...$
    \emph{corresponds} to a Boolean path
    $q_0 \lblTo{l_0\cup g_\mc i{}_0\cup g_\mc o{}_0 \cup a_0} q_1 \lblTo{l_1\cup g_\mc i{}_1\cup g_\mc o{}_1\cup a_1}...$
    where
    $g_\mc i{}_j$ encodes the guard $(\mc i_j = \bar {\mc d}_j[1], ..., \mc i_j = \bar {\mc d}_j[k])$,
    $g_\mc o{}_j$ encodes the guard $(\mc o_j = \bar {\mc d}_j[1], ..., \mc o_j = \bar {\mc d}_j[k])$,
    and $a_j \in 2^{\Asgn}$ encodes $\bar a_j \in \bbB^k$,
    for $j \in \bbN_0$.
    From the definition of paths of register automata on page~\pageref{page:atm-path},
    it follows that for every path of a register automaton,
    there exists a path in the associated Boolean automaton to which the data-path corresponds.
    Consider the reverse direction,
    where we say that a Boolean path \emph{corresponds} to a data-path iff the data-path corresponds to it.
    The reverse direction does not necessarily hold:
    there is a register automaton $A$ (e.g., with 2 registers)
    where some Boolean paths of $A_\bbB$ do not have a corresponding data-path in $A$.
    This is because the letters of a Boolean path can describe contradictory guards.
    For example, let a transition in a Boolean path have $\bar a = (\true,\true)$,
    meaning that in a data-path the value of data-input is stored into the registers $\mc r_1$ and $\mc r_2$.
    Hence, in the next transition of the data-path,
    $\mc i = \mc r_1 \Iff \mc i = \mc r_2$ must hold,
    but the Boolean path may have $g_\mc i = \{g_{\mc{ir_2}}\}$ (describing the guard $\mc i \neq \mc r_1 \land \mc i = \mc r_2$).
    Thus, we got the following.

    \begin{obs}\label{obs:path-correspondence-A}\
      \li
      \- For every register automaton $A$, every data-path in $A$ has exactly one corresponding Boolean path in $A_\bbB$.
      \- There exists a register automaton $A$ where some Boolean paths of $A_\bbB$ do not correspond to any data-path of $A$.
      \il
    \end{obs}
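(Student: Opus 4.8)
The plan is to prove the two items separately: item~(1) by a construction (giving existence of a corresponding Boolean path) together with a short rigidity argument (giving uniqueness), and item~(2) by exhibiting the two-register automaton sketched informally just before the statement.

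\emph{Item (1).} Given a data-path $\rho = (q_0, \mc{\bar d}_0) \lblTo{(l_0, \mc i_0, \mc o_0, \bar a_0)} (q_1, \mc{\bar d}_1) \lblTo{(l_1, \mc i_1, \mc o_1, \bar a_1)} \cdots$ of $A$, I would read off, for each $j \in \bbN_0$, the unique letters $g_\mc i{}_j \in 2^{G_\mc i}$, $g_\mc o{}_j \in 2^{G_\mc o}$, $a_j \in 2^{\Asgn}$ encoding respectively the guards $(\mc i_j = \mc{\bar d}_j[1], ..., \mc i_j = \mc{\bar d}_j[k])$, $(\mc o_j = \mc{\bar d}_j[1], ..., \mc o_j = \mc{\bar d}_j[k])$ and the assignment $\bar a_j$, and set $\beta := q_0 \lblTo{l_0 \cup g_\mc i{}_0 \cup g_\mc o{}_0 \cup a_0} q_1 \lblTo{l_1 \cup g_\mc i{}_1 \cup g_\mc o{}_1 \cup a_1} \cdots$. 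Then $\rho$ corresponds to $\beta$ by construction, and $\beta$ is a genuine Boolean path: the path condition for $\rho$ states $(q_{j+1}, \bar a_j) \in \delta(q_j, l_j, \mc i_j = \mc{\bar d}_j[1], ..., \mc o_j = \mc{\bar d}_j[k])$, and by the definition of $\delta_\bbB$ (with the chosen guard/assignment letters moved to the left) this is precisely $(q_j, l_j \cup g_\mc i{}_j \cup g_\mc o{}_j \cup a_j, q_{j+1}) \in \delta_\bbB$. For uniqueness, note that if $\beta'$ is any Boolean path to which $\rho$ corresponds, then the definition of correspondence already fixes its state sequence (equal to that of $\rho$), the $2^P$-part of each of its letters (the $2^P$-projection of $\rho$'s letter), its $2^{G_\mc i}$-part (forced to encode $(\mc i_j = \mc{\bar d}_j[1], ..., \mc i_j = \mc{\bar d}_j[k])$), and likewise its $2^{G_\mc o}$- and $2^{\Asgn}$-parts; since a guard and an assignment each have a unique encoding, $\beta' = \beta$.

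\emph{Item (2).} I would take $\mc R = \{\mc r_1, \mc r_2\}$, $Q = \{q_0, q_1\}$ with $q_0$ initial, $P = \emptyset$, and $\mc d_0, F$ arbitrary, and let $\delta$ consist of the transitions (i)~$(q_0, \emptyset, \bar b_\mc i, \bar b_\mc o, (\true,\true), q_1)$ for all $\bar b_\mc i, \bar b_\mc o \in \bbB^2$, and (ii)~$(q_1, \emptyset, (\false,\true), \bar b_\mc o, (\false,\false), q_1)$ for all $\bar b_\mc o \in \bbB^2$. Then $A_\bbB$ contains the Boolean path $\beta = q_0 \lblTo{\{a_{\mc r_1}, a_{\mc r_2}\}} q_1 \lblTo{\{g_{\mc{ir}_2}\}} q_1 \lblTo{\{g_{\mc{ir}_2}\}} \cdots$, taking the empty choice for the $P$-, $\mc o$-guard-, and first-step $\mc i$-guard-components. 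Suppose some data-path $\rho$ of $A$ corresponded to $\beta$. Its first transition has assignment $\bar a_0 = (\true, \true)$, so $\mc{\bar d}_1[1] = \mc{\bar d}_1[2] = \mc i_0$; hence at step~$1$ the guard $(\mc i_1 = \mc{\bar d}_1[1], \mc i_1 = \mc{\bar d}_1[2])$ equals $(\true,\true)$ or $(\false,\false)$, and in particular is never $(\false,\true)$. But correspondence forces that guard to be the one encoded by the letter $\{g_{\mc{ir}_2}\}$, namely $(\false,\true)$ --- a contradiction. So no data-path of $A$ corresponds to $\beta$, which is exactly what item~(2) asserts.

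I do not foresee a substantial obstacle; both arguments are short. The only point that needs genuine care is in item~(1): ``corresponds'' must be read as pinning down, simultaneously, the entire state sequence and all four sub-letters of every transition, so that uniqueness is really forced rather than merely the natural choice. In item~(2) the only routine checks are that the listed $\delta$ is a well-defined transition function and that the semantics provides no loophole for realizing the contradictory guard after a store into both registers --- which the identity $\mc{\bar d}_1[1] = \mc{\bar d}_1[2]$ rules out.
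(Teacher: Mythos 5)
Your proposal is correct and follows essentially the same route as the paper, which establishes item~(1) directly from the definition of data-paths and of the correspondence (all four sub-letters and the state sequence are determined, giving existence and uniqueness), and item~(2) by the very same two-register counterexample: a store into both $\mc r_1$ and $\mc r_2$ followed by a letter encoding the contradictory guard $\mc i \neq \mc r_1 \land \mc i = \mc r_2$. Your write-up merely makes the automaton and the rigidity argument explicit where the paper leaves them as a sketch.
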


    A \emph{Boolean word} is a projection of a Boolean path into $2^{P_\bbB}$;
    note that it contains information about assignment actions.

    Similarly we define Boolean transducers.
    Given a $k$-register transducer $T = \RTtuple$,
    a \emph{Boolean transducer} $T_\bbB = \tpl{I_\bbB, O_\bbB, S, s_0, \tau_\bbB}$ is a standard register-less transducer where:
    $I_\bbB = I \cup G_\mc i$, $G_\mc i = \{g_\mc{ir_1}, ..., g_{\mc{ir}_k}\}$,
    $O_\bbB = O \cup Asgn \cup O_k$, $Asgn = \{a_{\mc r_1}, ..., a_{\mc r_k}\}$, and
    $O_k$ has enough Boolean signals to encode the numbers $[k]$.
    The transition function $\tau_\bbB: S \x 2^{I_\bbB}  \to  S \x 2^{O_\bbB}$
    contains $(s, l\cup g_\mc i, o \cup o_k \cup a, s')$ iff $(s, l, \bar b_\mc i, o, \tilde o_k, \bar a, s') \in \tau$
    where $s,s' \in S$, $l \in 2^I$,
    $a \in 2^{\Asgn}$ encodes $\bar a \in \bbB^k$,
    $g_\mc i \in 2^{G_\mc i}$ encodes $\bar b_\mc i \in \bbB^k$,
    and $o_k \in 2^{O_k}$ encodes $\tilde o_k \in [k]$.
    A \emph{Boolean path} is an infinite sequence
    $s_0 \lblTo{l_0 \cup g_\mc i{}_0, o_0 \cup o_k{}_0 \cup a_0} s_1 \lblTo{l_1\cup g_\mc i{}_1, o_1 \cup o_k{}_1 \cup a_1} ...$
    from $(S \x 2^{I_\bbB} \x 2^{O_\bbB})^\omega$ that satisfies $\tau_\bbB$.

    Because every register transducer can be viewed as a register automaton,
    a similar observation holds for the register transducers.

  \subsection{Verifier to Remove Inconsistent Guards ($V_k$ and $A_\bbB @ V_k$)}\label{sec:AV}

    We introduce the automaton called verifier that filters out the Boolean paths of $A_\bbB$ that do not correspond to any data-paths.

    \parbf{$\mathbf{V_k}$}
    Given $k \in \bbN$,
    the \emph{verifier} is a deterministic looping register-less automaton
    $V_k = \tpl{P_V, \Pi, \pi_0, \delta_V}$ where
    \li
    \- $\Pi$ is the set of all possible partitions of $\{\mc r_1, ..., \mc r_k\}$;
       the initial state $\pi_0 = \{\{\mc r_1, ..., \mc r_k\}\}$ contains the only partition.
       Later, we will a partition-state to track if the registers have the same value.

    \- $P_V = G_\mc i \cup G_\mc o \cup \Asgn$ where $G_\mc i = \{g_{\mc i \mc r_1}, ..., g_{\mc i \mc r_k}\}$, $G_\mc o = \{g_{\mc o\mc r_1}, ..., g_{\mc o \mc r_k}\}$, $\Asgn = \{a_{\mc r_1}, ..., a_{\mc r_k}\}$.

    \- $\delta_V: \Pi \x 2^{P_V}  \to  \Pi$ contains $\pi \lblTo{{g_\mc i} \cup {g_\mc o} \cup a} \pi'$ where:
       \li
       \- the guard-letter $g_\mc i\cup g_\mc o$ respects the current partition:
          \li
          \- for every $\mc r_m = \mc r_n$ of $\pi$ (i.e., belonging to the same partition):\\
             $g_{\mc i \mc r_m}\!\! \in {g_\mc i} \Iff g_{\mc i \mc r_n}\!\! \in {g_\mc i}$ and
             $g_{\mc o \mc r_m}\!\! \in {g_\mc o} \Iff g_{\mc o \mc r_n}\!\! \in {g_\mc o}$;

          \- for every $\mc r_m \neq \mc r_n$ of $\pi$ (i.e., belonging to different partitions):\\
             $g_{\mc i \mc r_m} \!\!\in {g_\mc i} \Impl g_{\mc i \mc r_n} \!\!\not\in {g_\mc i}$ and
             $g_{\mc o \mc r_m} \!\!\in {g_\mc o} \Impl g_{\mc o \mc r_n} \!\!\not\in {g_\mc o}$;
          \il

       \- the successor partition respects the assignment-letter $a$, formalized as follows.
          For every $m$, $n$ in $[k]$,
          let $e_{mn}$ denote that $\pi$ contains $\mc r_m = \mc r_n$,
          and $e'_{mn}$ is for $\pi'$.
          The value $e'_{mn}$ is uniquely defined:
          $$
            e'_{mn} =
            (a_{\mc r_m} \land a_{\mc r_n}) \lor
            (\neg a_{\mc r_m} \land a_{\mc r_n} \land g_{\mc i \mc r_m}) \lor
            (a_{\mc r_m} \land \neg a_{\mc r_n} \land g_{\mc i \mc r_n}) \lor
            (\neg a_{\mc r_m} \land \neg a_{\mc r_n} \land e_{mn}).
          $$
          This definition, together with the previous item,
          ensures that all $e'_{mn}$ together form a partition
          (e.g., it is impossible to get $e'_{1,2} \land e'_{2,3} \land \neg e'_{1,3}$).

       \il

    \- The acceptance condition (not shown in the tuple) defines every path (infinite by definition) to be accepting;
       hence, every word that has a path in the automaton is accepted.
    \il
    An example of a verifier is in Figure~\ref{fig:verifier-automaton}.

    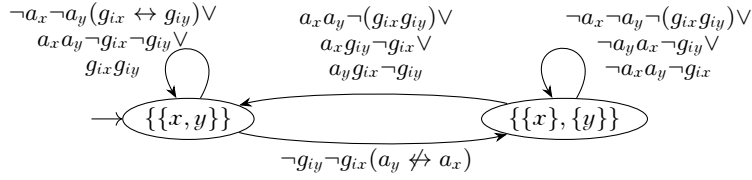
\begin{figure}[bt]
      \centering
      \begin{tikzpicture}
	\begin{pgfonlayer}{nodelayer}
		\node [style=text ellipse, initial] (0) at (-1, 1.25) {$\{\{\mc x,\mc y\}\}$};
		\node [style=text ellipse] (1) at (4, 1.25) {$\{ \{\mc x\}, \{\mc y\} \}$};
		\node [style=textual] (2) at (1.5, 2.25) {\specialcellC{$a_\mc x a_\mc y \neg (g_{\mc{ix}} g_{\mc{iy}}) \lor$\vspace{-0.5mm} \\$a_\mc x g_{\mc{iy}} \neg g_{\mc{ix}} \lor$\vspace{-0.5mm}\\$a_\mc y g_{\mc{ix}} \neg g_{\mc{iy}}$}};
		\node [style=textual, yshift=-0.5mm] (3) at (1.5, 0.75) {$\neg g_{\mc{iy}} \neg g_{\mc{ix}} (a_\mc y \not\iff a_\mc x)$};
		\node [style=textual, yshift=0.5mm] (4) at (-2, 2.25) {\specialcellC{$\neg a_\mc x \neg a_\mc y (g_{\mc{ix}} \iff g_{\mc{iy}}) \lor$\vspace{-0.5mm} \\ $a_\mc x a_\mc y \neg g_{\mc{ix}} \neg g_{\mc{iy}}\lor$\vspace{-0.5mm}\\$g_{\mc{ix}} g_{\mc{iy}}$}};
		\node [style=textual] (5) at (5.25, 2.25) {\specialcellC{$\neg a_\mc x \neg a_\mc y \neg (g_{\mc{ix}} g_{\mc{iy}}) \lor$\vspace{-0.5mm}   \\   $\neg a_\mc y a_\mc x \neg g_{\mc{iy}}\lor$\vspace{-0.5mm}\\    $\neg a_\mc x a_\mc y \neg g_{\mc{ix}}$}};
	\end{pgfonlayer}
	\begin{pgfonlayer}{edgelayer}
		\draw [style=arrow, in=120, out=60, loop] (1) to node[double]{} ();
		\draw [style=arrow, in=120, out=60, loop] (0) to ();
		\draw [style=arrow, bend right=15, looseness=0.50] (1) to (0);
		\draw [style=arrow, bend right=15, looseness=0.50] (0) to (1);
	\end{pgfonlayer}
\end{tikzpicture}
      \vspace{-2mm}
      \caption{A verifier automaton (a register-less deterministic looping automaton)
        for 2-register automata with $\mc R = \{\mc x, \mc y\}$.
        The edges have symbolic labels.
        Later,
        the left state $\{\{\mc x, \mc y\}\}$ will be used to denote that
        the registers $\mc x$ and $\mc y$ store the same value,
        while the right state $\{\{\mc x\}, \{\mc y\}\}$ will denote
        that they store different values.
        The automaton has similar restrictions for $\mc o$ (not shown).%
      }
      \label{fig:verifier-automaton}
    \end{figure}

    \parbf{$\mathbf{A_\bbB @ V_k}$}
    Given a verifier $V_k = \tpl{P^V, Q^V, q_0^V, \delta^V}$ and
    a register-less universal co-B\"uchi automaton $A_\bbB = \tpl{P^A, Q^A, q_0^A, \delta^A, F^A}$,
    let $A_\bbB @ V$ denote the universal co-B\"uchi automaton $\tpl{P, Q, q_0, \delta, F}$ where:
    \li
    \- $P = P^V \cup P^A$;
    \- $Q = Q^V \x Q^A$, $q_0 = (q_0^V, q_0^A)$;
    \- $\delta: Q \x 2^P \to 2^Q$ has $\big((q_V, q_A), p, (q'_V,q'_A)\big)$ iff
       $(q_V, p\cap 2^{P^V}, q'_V) \in \delta^V$ and $(q_A, p \cap 2^{P^A}, q'_A) \in \delta^A$; and
    \- $F = Q^V \x F^A$.
    \il
    Since $P^A = P' \cup G_\mc i \cup G_\mc o \cup \Asgn$ (where $P'$ are the Boolean signals of the register automaton $A$) and
    $P^V = G_\mc i \cup G_\mc o \cup \Asgn$,
    the automaton $A_\bbB @ V_k$ works on words from $(P' \cup G_\mc i \cup G_\mc o \cup \Asgn)^\omega$.
    The words of $A_\bbB @ V_k$ that do not fall out of $V_k$ are called \emph{consistent},
    otherwise \emph{inconsistent}.
    Notice that falling out of the verifier component favours accepting;
    $L(A_\bbB @ V_k) = \overline{L(V_k)} \cup L(A_\bbB)$,
    or, equivalently, $\overline{L(A_\bbB @ V_k)} = L(V_k) \cap \overline{L(A_\bbB)}$.
    Thus, the rejected words of $A_\bbB @ V_k$ are consistent and are rejected by $A_\bbB$.

    \begin{obs}\label{obs:path-correspondence-AV}
      For every universal co-B\"uchi $k$-register automaton $A$:
      \li
      \- every data-path of $A$ has exactly one corresponding Boolean path in $A_\bbB @ V_k$;
      \- every Boolean path of $A_\bbB @ V_k$ has either one or infinitely many corresponding data-paths in $A$.
      \il
    \end{obs}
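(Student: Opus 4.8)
The plan is to derive both items from Observation~\ref{obs:path-correspondence-A} together with the single invariant that the verifier state reached in $A_\bbB @ V_k$ is exactly the partition of $[k]$ induced by equalities among the register values. For item~1, let $\rho$ be a data-path of $A$ with register valuations $\bar{\mc d}_0,\bar{\mc d}_1,\ldots$ and, by Observation~\ref{obs:path-correspondence-A}, let $\sigma$ be the unique corresponding Boolean path of $A_\bbB$. Put $\pi_j$ for the partition of $[k]$ in which $m$ and $n$ are grouped iff $\bar{\mc d}_j[m]=\bar{\mc d}_j[n]$. Then $\pi_0=\{\{\mc r_1,\ldots,\mc r_k\}\}$ because $\bar{\mc d}_0=\mc d_0^k$, and I would check that $\pi_j \lblTo{g_\mc i{}_j\cup g_\mc o{}_j\cup a_j}\pi_{j+1}$ is a transition of $V_k$: the two ``respects the partition'' clauses hold since $\mc i_j=\bar{\mc d}_j[m]\Iff\mc i_j=\bar{\mc d}_j[n]$ whenever $\bar{\mc d}_j[m]=\bar{\mc d}_j[n]$, and $\mc i_j$ cannot equal two distinct values (likewise for $\mc o_j$); and a four-way case split on $(\bar a_j[m],\bar a_j[n])$ shows that the formula defining $e'_{mn}$ evaluates to exactly ``$\bar{\mc d}_{j+1}[m]=\bar{\mc d}_{j+1}[n]$'', so $\pi_{j+1}$ is again the equality partition. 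Hence $\sigma$ paired with $(\pi_j)_j$ is a path of $A_\bbB @ V_k$ corresponding to $\rho$, and it is the only one, by uniqueness in $A_\bbB$ and determinism of $V_k$.

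For item~2, let $\sigma$ be a Boolean path of $A_\bbB @ V_k$ with verifier states $\pi_0,\pi_1,\ldots$; I would construct corresponding data-paths by induction maintaining the invariant ``$\bar{\mc d}_j$ realizes $\pi_j$'', starting from $\bar{\mc d}_0=\mc d_0^k$, which realizes $\pi_0$. At step $j$, because the guard letter $g_\mc i{}_j\cup g_\mc o{}_j$ of $\sigma$ respects $\pi_j$, the set $\{m:g_{\mc i\mc r_m}\in g_\mc i{}_j\}$ is either empty or exactly one class of $\pi_j$; so $\mc i_j$ is forced to that class's common value, or it may be chosen freely outside $\{\bar{\mc d}_j[1],\ldots,\bar{\mc d}_j[k]\}$, of which there are infinitely many since $\mc D$ is infinite. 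The same applies to $\mc o_j$, independently, since the transition relation of a register automaton never compares $\mc i$ with $\mc o$ and $\mc o$ is never stored. Then $\bar{\mc d}_{j+1}$ is determined from $\bar{\mc d}_j$, $\mc i_j$ and $\bar a_j$, and by the $e'_{mn}$ formula it realizes $\pi_{j+1}$, closing the induction and yielding at least one data-path. If some step admits a free choice, that single step already produces infinitely many distinct data-paths all corresponding to $\sigma$; if no step does, every $\mc i_j$ and $\mc o_j$ is forced and, $\mc d_0$ being fixed, the data-path is unique---which is the stated dichotomy.

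I expect the \emph{only} genuine work to be the verifier bookkeeping in both directions: checking that the disjunctive formula for $e'_{mn}$ coincides with ``$\bar{\mc d}_{j+1}[m]=\bar{\mc d}_{j+1}[n]$'' under the assignment $\bar a_j$ (and that it always defines a partition), and, for item~2, arguing non-vacuously that the inductive construction never gets stuck. This last point is precisely where the ``respects the partition'' conditions of $\delta_V$ are used: they guarantee that the guard letters of $\sigma$ are realizable by some register valuation consistent with $\pi_j$, so that a data-path can always be continued.
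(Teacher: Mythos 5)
Your proof is correct and follows essentially the same route as the paper: item~1 by checking that the equality partition of the register valuations traces a run of $V_k$, and item~2 by the inductive construction that picks $\mc i_j,\mc o_j$ satisfying the guards, with infinitely many choices exactly when a guard is all-false. You merely make explicit the invariant (``the verifier state equals the equality partition of $\bar{\mc d}_j$'') that the paper's one-line justification ``such values exist because $\Pi_j$, $g_{\mc i}{}_j$, $g_{\mc o}{}_j$ are non-contradictory'' silently relies on.
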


    \begin{proof}[Proof]
    The first item follows from the definition of a data-pata.
    Consider the second item.
    Consider a Boolean path of $A_\bbB @ V_k$
    $$
      (q_0, \Pi_0) \lblTo{l_0\cup g_\mc i{}_0 \cup g_\mc o{}_0 \cup a_0} (q_1, \Pi_1) \lblTo{l_1 \cup g_\mc i{}_1 \cup g_\mc o{}_1 \cup a_1}...
    $$
    (where $q_j$ is a state of $A_\bbB$,
     $\Pi_j$ is a state of $V_k$,
     $l_j \in 2^P$,
     $g_\mc i{}_j \in 2^{G_\mc i}$,
     $g_\mc o{}_j \in 2^{G_\mc o}$, and
     $a_j \in 2^{\Asgn}$,
     for every $j \in \bbN_0$).
    We construct a corresponding data-path of $A$
    $$
      (q_0, \bar{\mc d}_0) \lblTo{(l_0, \mc i_0, \mc o_0, \bar a_0)} (q_1, \bar{\mc d}_1) \lblTo{(l_1, \mc i_1, \mc o_1, \bar a_1)}...:
    $$
    \li
    \- $\bar{\mc d}_0 = \mc d_0^k$;
    \- $\bar a_j \in \bbB^k$ encodes $a_j \in 2^{\Asgn}$,
    \- $\bar{\mc d}_{j+1}$ is uniquely defined by $\bar{\mc d}_j$, $\mc i_j$, and $\bar a_j$; and
    \- $\mc i_j$ and $\mc o_j$ are arbitrary such that
       $(\bar{\mc d}_j, \mc i_j, \mc o_j)$ satisfies the guards encoded by ${g_\mc i}_j$ and ${g_\mc o}_j$.
       Such values exist, because $\Pi_j$ and $g_\mc i{}_j$ and $g_\mc o{}_j$ are non-contradictory.
       Note that there are $>\!1$ possible values for $\mc i_j$ (in fact, infinitely many)
       iff ${g_\mc i}_j$ encodes the guard $\bigwedge_{m \in [k]} \mc i \neq \mc r_m$ (i.e., $\false^k$);
       similarly for $\mc o_j$.
    \il
    \end{proof}

    The observation, together with the definition of acceptance by $V_k$, implies the following.

    \begin{cor}\label{obs:AV-non-emptiness}
      For every universal co-B\"uchi $k$-register automaton $A$:\\
        $~~~~~~A_\bbB @ V_k$ has a rejected Boolean word $\Iff$ $A$ has a rejected data-word.
    \end{cor}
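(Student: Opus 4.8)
The corollary is meant to follow directly from Observation~\ref{obs:path-correspondence-AV} together with the acceptance condition of $A_\bbB @ V_k$ (a universal co-B\"uchi automaton whose verifier component is looping, hence never blocks and never contributes to rejection). The plan is to prove both directions of the $\Iff$ by transporting a witnessing (rejecting) path along the path-correspondence established in the observation. Recall that a Boolean word is rejected by $A_\bbB @ V_k$ iff it has a run (a Boolean path) that visits $F = Q^V \x F^A$ infinitely often, i.e.\ visits $F^A$ in the $A_\bbB$-component infinitely often; and similarly a data-word is rejected by the \UCWR $A$ iff it has a data-path visiting $F^A$ infinitely often. The key point is that the $A_\bbB$-state sequence is literally shared between a data-path and its corresponding Boolean path, so "visits $F^A$ infinitely often'' is preserved in both directions of the correspondence.

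\textbf{Forward direction ($\Rightarrow$).} Suppose $A_\bbB @ V_k$ has a rejected Boolean word $w_\bbB$. By definition of co-B\"uchi rejection there is a Boolean path $\rho_\bbB$ of $A_\bbB @ V_k$ over $w_\bbB$ whose $A_\bbB$-component visits $F^A$ infinitely often. In particular $\rho_\bbB$ is a genuine path of $A_\bbB @ V_k$, so it does not fall out of $V_k$; by Observation~\ref{obs:path-correspondence-AV}(second item) there is a corresponding data-path $\rho$ of $A$. Since the state sequence $(q_j)_j$ of the $A_\bbB$-component of $\rho_\bbB$ coincides with the state sequence of $\rho$ (this is exactly what "corresponds'' means, per the definition of correspondence on the pages before), $\rho$ also visits $F^A$ infinitely often. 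Its projection into $\Sigma = 2^P \x \mc D^2$ is then a data-word of $A$ that admits a path visiting $F^A$ infinitely often, hence a rejected data-word of $A$.

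\textbf{Backward direction ($\Leftarrow$).} Conversely, suppose $A$ has a rejected data-word, witnessed by a data-path $\rho$ of $A$ visiting $F^A$ infinitely often. By Observation~\ref{obs:path-correspondence-A}(first item), or equivalently the first item of Observation~\ref{obs:path-correspondence-AV}, $\rho$ has exactly one corresponding Boolean path in $A_\bbB @ V_k$; this path exists precisely because the guard- and assignment-letters read off from $\rho$ are non-contradictory and thus respected by the verifier $V_k$, so the run does not fall out of the verifier component. Again the $A_\bbB$-state sequences coincide, so this Boolean path visits $F^A$ infinitely often and is therefore a rejecting run; its projection into $2^{P_\bbB}$ is a rejected Boolean word of $A_\bbB @ V_k$.

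\textbf{Main obstacle.} There is no real difficulty here beyond bookkeeping: the content of the corollary is entirely packaged into Observation~\ref{obs:path-correspondence-AV}, and the only thing to be careful about is matching the two different acceptance semantics (rejection of a \emph{word} versus existence of a bad \emph{path}) and noting that the verifier component, being a looping automaton with all paths accepting, neither blocks a run nor affects the co-B\"uchi condition --- so "rejected by $A_\bbB @ V_k$'' is governed solely by $F^A$ in the $A_\bbB$-component, which is exactly the component shared with the data-path. One should also double-check the edge case flagged in the proof of the observation: when $g_\mc i{}_j$ encodes $\false^k$ the data-path is not unique, but this only affects \emph{which} data-word we land on, not whether a rejecting data-path exists, so it is harmless for the statement as phrased.
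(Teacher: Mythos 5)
Your proof is correct and follows exactly the route the paper intends: the paper derives this corollary in one line from Observation~\ref{obs:path-correspondence-AV} together with the acceptance condition of $V_k$, and your write-up simply makes explicit the transfer of a rejecting ($F^A$-visiting-infinitely-often) path along the path correspondence in both directions, including the observation that rejected words of $A_\bbB @ V_k$ must stay inside the verifier. No gaps; the edge case you flag about non-unique data-paths is indeed harmless for the existential statement being proved.
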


    If we look at the dual automaton $\bar A$ (non-deterministic B\"uchi) and the dual $\overline{A_\bbB @ V_k}$,
    then the corollary states that non-emptiness of non-deterministic B\"uchi register automata is decidable.
    This result was earlier established in~\cite[Thm.1]{Kaminski} using cutoffs (we discussed cutoffs on page~\pageref{page:cutoff}).
    Our verifier uses a similar insight, but it is handy in the context of synthesis.

    \ak{restore}

    \ak{restore}

  \subsection{Focusing on Transducer Data-Words ($\Tall$ and $A \ox \Tall$)}

    In the end, we will have a register-less automaton $H$,
    from which we will a Boolean associate of a register transducer.
    In the Boolean associate, the assignment actions are modelled as Boolean outputs.
    Therefore, the automaton $H$ should have Boolean signals expressing the assignment actions of the Boolean transducer.
    The automaton $\Tall$ fulfills this purpose:
    it adds $k_T$ fresh registers to $A$ that will be controlled by transducers via fresh Boolean signals.

    \parbf{$\mathbf{\Tall}$}
    Let $k_T \in \bbN$ and let $\Asgn^T = \{a_{\mc r^T_1}, ..., a_{\mc r^T_{k_T}}\}$ be fresh Boolean signals.
    $\Tall$ is a deterministic co-B\"uchi $k_T$-register automaton $\RAtuple$
    with $P = I \cup O \cup \Asgn^T$, $\mc P = \{\mc i, \mc o\}$, $Q = \{q_0, \lightning\}$, $F = \{\lightning\}$.
    The transition function
    $$
      Q \x 2^{I \cup O \cup \Asgn^T} \x \bbB_\mc i^{k_T} \x \bbB_\mc o^{k_T}  \to  Q \x \bbB^{k_T}
    $$
    \li
    \- for every $\bar g_\mc i \in \bbB_\mc i^{k_T}$,
       $\bar g_\mc o \in \{\bar g \in \bbB^{k_T} \| \exists j. \bar g[j] = \true\}$,
       and $a \in 2^{\Asgn^T}$,
       contains $(q_0, \bar a)$ where $\bar a[j]=\true$ iff $a_{\mc r^T_j} \in a$ for every $j \in [k_T]$;
    \- when $\bar g_\mc o$ does not satisfy the above condition, it transits from $q_0$ to $\lightning$;
    \- it self-loops in $\lightning$ without storing for every letter.
    \il
    In words:
    $\Tall$ ensures that the value of data-output $\mc o$ comes from a register and
    the assignment actions are \emph{synced} with the Boolean signals $\Asgn^T$.

    \begin{obs}
      Let $k_T \in \bbN$, then: for every $w \in (2^{I \cup O \cup \Asgn^T} \x \mc D^2)^\omega$:
      $$
        w \models \Tall ~\Iff~ \exists T\!\!: w \in L(T),
      $$
      where $T$ is a $k_T$-register transducer (possibly, $|S| = \infty$)
      whose output is extended with $\Asgn^T$ signals that are synced with $T$'s assignment actions.
    \end{obs}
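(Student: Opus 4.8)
The plan is to prove the two implications separately, after first rewriting the left-hand side into a concrete condition on $w$. Since $\Tall$ is deterministic and complete, it has a unique run on $w$; and since $\lightning$ is absorbing, that run is accepting (never visits $F=\{\lightning\}$ infinitely often) exactly when it never enters $\lightning$ at all. Reading off the transition function, this happens iff at every position $j$ the data-output $\mc o_j$ equals the content of some register of $\Tall$, where $\Tall$'s register contents evolve from $\mc d_0^{k_T}$ by applying, at each step, the assignment encoded by the $\Asgn^T$-letter of $w$. Write $\bar{\mc d}_0,\bar{\mc d}_1,\dots$ for this sequence of valuations ($\bar{\mc d}_0=\mc d_0^{k_T}$, and $\bar{\mc d}_{j+1}[n]=\mc i_j$ if the $\Asgn^T$-letter at step $j$ selects $\mc r_n^T$, else $\bar{\mc d}_j[n]$). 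So ``$w\models\Tall$'' is equivalent to: for every $j$ there is $n\in[k_T]$ with $\mc o_j=\bar{\mc d}_j[n]$.

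\smallskip\noindent\emph{From a transducer to $\Tall$.} Assume $w\in L(T)$ for a $k_T$-register transducer $T$ whose Boolean output is extended by $\Asgn^T$-signals synced with $T$'s store actions, and fix the $T$-path producing $w$. I would show by induction on $j$ that $T$'s register valuation along this path at step $j$ equals $\bar{\mc d}_j$: both start at $\mc d_0^{k_T}$, and both are updated by the same assignment, because the $\Asgn^T$-letter of $w$ at step $j$ encodes precisely $T$'s store at that step (the syncing assumption). Since $T$ may output only a value currently held in a register, $\mc o_j=\bar{\mc d}_j[n]$ for some $n$, for every $j$; by the rewriting above, $w\models\Tall$.

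\smallskip\noindent\emph{From $\Tall$ to a transducer.} Assume $w\models\Tall$ and, using the rewriting, pick for each $j$ an index $\mc{out}_j\in[k_T]$ with $\mc o_j=\bar{\mc d}_j[\mc{out}_j]$. I would build $T$ with (infinite) state set $\bbN_0\cupdot\{\dagger\}$, $s_0=0$, registers $\mc R=\{\mc r_1,\dots,\mc r_{k_T}\}$ and the same $\mc d_0$. From a state $j\in\bbN_0$, on Boolean input $x$ and guard $\bar g\in\bbB_\mc i^{k_T}$: if $x$ equals $w$'s Boolean input at step $j$ and $\bar g=(\mc i_j=\bar{\mc d}_j[1],\dots,\mc i_j=\bar{\mc d}_j[k_T])$, then emit $w$'s Boolean output at step $j$, emit data-index $\mc{out}_j$, store the assignment encoded by $w$'s $\Asgn^T$-letter at step $j$, and go to $j+1$; on every other $(x,\bar g)$ go to $\dagger$ with fixed store-free output; from $\dagger$ self-loop with fixed store-free output. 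This makes $\tau$ total and deterministic. Then I would verify by induction that running $T$ on the input projection of $w$ is at configuration $(j,\bar{\mc d}_j)$ at step $j$: at that configuration the guard $T$ recomputes from $\bar{\mc d}_j$ is exactly the one the ``correct'' branch hard-codes, so $T$ takes it, the emitted letter (including the $\Asgn^T$-part of the extended output, which matches $w$ by construction) is the letter of $w$ at step $j$, and the store carries the valuation to $\bar{\mc d}_{j+1}$. Hence the produced word is $w$ and $T$'s stores are $\Asgn^T$-synced, so $w\in L(T)$; the statement permits $|S|=\infty$, so $T$ is admissible.

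\smallskip The only non-routine points are the totality of $\tau$ (it must answer every pair in $2^I\x\bbB_\mc i^{k_T}$, which is exactly what the sink $\dagger$ is for) and the fact that the guard $T$ recomputes at runtime coincides with the one baked into the construction --- this is the register-tracking invariant, and the existence of the output index $\mc{out}_j$ is precisely what $w\models\Tall$ provides. The remaining steps are bookkeeping that follows the path definitions of register automata and transducers verbatim.
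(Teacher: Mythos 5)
The paper states this observation without any proof, so there is nothing to compare against line by line; your argument is correct and is the natural one. Both directions check out: the reduction of $w \models \Tall$ to ``at every step $\mc o$ equals some current register value, with registers driven by the $\Asgn^T$-letters'' is exactly what the definition of $\Tall$ (deterministic, with absorbing rejecting sink $\lightning$) gives, the register-tracking induction in the forward direction is sound because both $T$ and $\Tall$ start at $\mc d_0^{k_T}$ and apply the same (synced) assignments before comparing $\mc o$, and your word-hard-coding transducer with the sink state $\dagger$ correctly supplies the witness for the backward direction while keeping $\tau$ total and deterministic. Your infinite-state construction is also precisely what the paper's follow-up remark anticipates when it notes that $T$ may need $|S|=\infty$ to realize an ``irregular'' storing pattern of $w$.
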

    \noindent
    In the observation, $T$ might need infinitely many states,
    because an accepting path of $\Tall$ on $w$ might exhibit ``irregular'' storing behaviour,
    which cannot be expressed by a finite-state transducer (recall that transducers are deterministic).
    That is a minor technical detail though.

    \parbf{$\mathbf{A \ox \Tall}$}
    The product $A \ox \Tall$ of
    a universal co-B\"uchi register automaton $A=\RAtupleA$ and $\Tall=\TALLtuple$,
    where $P^T = P^A \cup \Asgn^T$,
    is a universal co-B\"uchi $(k_A+k_T)$-register automaton \\ $\RAtuple$,
    where $P = P^T$, $\mc R = \mc R^A\cupdot \mc R^T$, $Q = Q^A \xx Q^T$, $q_0 = (q_0^A, q_0^T)$, $F = F^A \xx Q^T \cup Q^A \xx F^T$,
    and the transition function
    $$
      \delta: Q \x 2^{I \cup O \cup \Asgn^T} \x \bbB_\mc i^{k_A + k_T} \x \bbB_\mc o^{k_A + k_T}  \to  2^{Q \x \bbB^{k_A}} \!\!\x\! \bbB^{k_T}
    $$
    respects both $\delta^A$ and $\delta^T$.

    \begin{obs}
      For every $k_T\in \bbN$, universal co-B\"uchi $k_A$-register automaton $A$, and $w \in (2^{P^A \cup \Asgn^T} \x \mc D^2)^\omega$:
      $$
        w \models A \ox \Tall
        ~\Iff~
        w \models \Tall \textit{ and } w|_{2^{P^A}} \models A,
      $$
      where $w|_{2^{P^A}}$ is a projection of $w$ into $2^{P^A}$.
    \end{obs}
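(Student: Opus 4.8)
Proof plan for the observation $w \models A \ox \Tall \Iff w \models \Tall \text{ and } w|_{2^{P^A}} \models A$.

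The statement is essentially a correctness check for the product construction, so the plan is to unfold the definitions of the three acceptance conditions and of the product transition function, and verify that runs of $A \ox \Tall$ on $w$ decompose canonically into runs of $A$ on $w|_{2^{P^A}}$ and the (unique) run of $\Tall$ on $w$. The main structural fact to establish first is a bijection between data-paths of $A \ox \Tall$ over $w$ and pairs $(\rho_A, \rho_T)$, where $\rho_A$ is a data-path of $A$ over $w|_{2^{P^A}}$ and $\rho_T$ is the data-path of $\Tall$ over $w$. This is where I would spend the care: a configuration of $A \ox \Tall$ is $\big((q_A,q_T), \bar{\mc d}^A \cdot \bar{\mc d}^T\big)$ with $\mc R = \mc R^A \cupdot \mc R^T$, so the register contents split cleanly; the guard component $\bbB_\mc i^{k_A+k_T} \x \bbB_\mc o^{k_A+k_T}$ likewise splits into the $\mc R^A$-part and the $\mc R^T$-part, each determined by the common data-letter $(\mc i_j,\mc o_j)$ and the respective register block. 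Because $\delta$ ``respects both $\delta^A$ and $\delta^T$,'' a joint transition exists iff the $A$-projection is a transition of $\delta^A$ and the $\Tall$-projection is a transition of $\delta^T$; and since the Boolean letter $l_j \in 2^{I\cup O\cup \Asgn^T}$ is shared, the guard on $\mc R^A$ depends only on data already tracked in $A$, so the two projected runs are exactly runs on $w|_{2^{P^A}}$ and on $w$, respectively. Note $\Tall$ is deterministic, so $\rho_T$ is unique; the nondeterminism of $A \ox \Tall$ comes solely from $A$.

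Given the bijection, the acceptance argument is routine. The automaton $A \ox \Tall$ is universal co-B\"uchi with $F = F^A \xx Q^T \cup Q^A \xx F^T$, so $w$ is accepted iff \emph{every} data-path $\rho$ of $A \ox \Tall$ over $w$ avoids $F$ infinitely often, i.e. avoids both $F^A$ in its $A$-component and $F^T$ in its $\Tall$-component infinitely often. Via the bijection, ``every $\rho$ avoids $F^A$ i.o.'' is equivalent to ``every data-path of $A$ over $w|_{2^{P^A}}$ avoids $F^A$ i.o.,'' which is exactly $w|_{2^{P^A}} \models A$; and ``every $\rho$ avoids $F^T$ i.o.'' is equivalent to ``the unique run $\rho_T$ of $\Tall$ over $w$ avoids $\lightning$ i.o.,'' which (since once $\Tall$ enters $\lightning$ it stays, so avoiding it i.o.\ means never entering it) is exactly $w \models \Tall$. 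Here I would cite the preceding observation characterizing $\Tall$'s acceptance if a cleaner phrasing is wanted, but the self-loop structure of $\Tall$ makes the co-B\"uchi condition trivially a safety condition. Combining the two equivalences over the conjunction quantifier gives the claim.

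The only genuine obstacle is bookkeeping around the alphabets and the ``shared'' Boolean letter: one must check that $P^T = P^A \cup \Asgn^T$ and that the $\Asgn^T$ signals, which are visible to $\Tall$ but irrelevant to $A$, are correctly dropped by the projection $w|_{2^{P^A}}$ without affecting whether $\rho_A$ is a valid $A$-run — it is not, since $\delta^A$ ignores $\Asgn^T$. A secondary subtlety, already flagged in the text, is that a given $w$ satisfying $\Tall$ may force ``irregular'' storing so that no \emph{finite-state} transducer realizes it; but this observation is stated at the level of languages $L(\cdot)$ of (possibly infinite-state) automata, so finiteness never enters and the decomposition argument goes through verbatim.
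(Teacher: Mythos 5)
Your argument is correct and is exactly the routine definition-unfolding that the paper relies on implicitly: this observation is stated without proof, as an immediate consequence of the product construction, and your decomposition of product runs into pairs $(\rho_A,\rho_T)$ with $\rho_T$ unique (by determinism of $\Tall$) together with the splitting of $F = F^A \xx Q^T \cup Q^A \xx F^T$ is the intended reasoning. The only caveat worth a sentence in a written-out version is that distributing the conjunction out of the universal quantifier over $A$-runs requires that $A$ have at least one infinite run on $w|_{2^{P^A}}$ (otherwise both sides involving $A$ hold vacuously while $w \models \Tall$ could still fail), a degenerate case the paper implicitly excludes by treating transition functions as total.
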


    \ak{note: env cannot force falling out of $\Tall$}


  \subsection{Synthesis-tailored Verifier ($AT_\bbB @ W$)} \label{sec:ATW}

    For brevity, let $AT$ denote $A\ox \Tall$, and let $AT_\bbB$ be its Boolean associate.

    The automaton $AT_\bbB @ W$ that will be introduced in this section closely resembles $AT_\bbB @ V_k$ and $A_\bbB @ V_k$,
    but it is better suited for synthesis.

    Recall from Section~\ref{sec:associates} that every $T_\bbB$ generates words from $(2^{I \cup G^T_\mc i} \x 2^{O \cup \Asgn^T \cup O_{k_T}})^\omega$,
    where $\Asgn^T = \{a_{\mc r^T_1}, ..., a_{\mc r^T_{k_T}}\}$, $G^T_\mc i = \{g_{\mc i \mc r^T_1}, ..., g_{\mc i \mc r^T_{k_T}}\}$,
    and $O_{k_T}$ has enough Boolean signals to encode the numbers $[k_T]$.
    For synthesis we want our target specification automaton to have the same alphabet.
    The automaton $AT_\bbB @ V_k$ uses $\mc o$-guards instead of signals $O_k$,
    hence we introduce the automaton $AT_\bbB @ W$
    (we do not introduce $W$ separately).

    Suppose we have $\ATV = \tpl{P, Q, q_0, \delta, F}$
    with $P = I \cup O \cup G_\mc i^T \cup G_\mc i^A \cup G_\mc o^T \cup G_\mc o^A \cup \Asgn^T \cup \Asgn^A$
    and $\delta: Q \x 2^P \to 2^Q$.
    The automaton $\ATW = \tpl{P', Q, q_0, \delta', F}$ has the same states,
    but $P' = (P \setminus (G^T_\mc o \cup G^A_\mc o)) \cup O_{k_T}$ and
    the transition function $\delta'$ is derived from $\delta$ as follows.
    For every $(\pi, q) \lblTo{(i, o, g_\mc i, g_\mc o, a)} (\pi', q')$ of $\delta$
    (where $\pi$ and $\pi'$ are partitions of $\mc R^A \cup \mc R^T$, $q$ and $q'$ are states of $AT_\bbB$,
     $i \in 2^I$, $o \in 2^O$, $g_\mc i \in 2^{G^A_\mc i\cup G^T_\mc i}$, $g_\mc o \in 2^{G^A_\mc o \cup G^T_\mc o}$, $a \in 2^{\Asgn^A\cup\Asgn^T}$):
    \li
    \- let $J = \{j_1, ..., j_l\} \subset \bbN$ be such that
       $g_\mc o$ contains $\mc o = \mc r^T_j$ for every $j \in J$;
    \- for every $j \in J$,
       add to $\delta'$ the transition $(\pi, q)\lblTo{(i,o,g_\mc i,\tilde j,a)}(\pi', q')$,
       where $\tilde j \in 2^{O_{k_T}}$ encodes the number $j \in [k_T]$.
    \- Note that if $J$ is empty ($g_\mc o$ requires that $\bigwedge_{t \in [k_T]} \mc o \neq \mc r^T_t$),
       then we do not add transitions to $\delta'$,
       because no transducer can produce such a value for $\mc o$.
    \il


    Notice that $\ATW$, just like $\ATV$, accepts inconsistent words (those fall out of the original $V_k$).
    Inconsistency in those words can come from signals $G_\mc i^A \cup G_\mc i^T$.
    Later, these Boolean signals will either be hidden ($G_\mc i^A$) or under environment control ($G_\mc i^T$),
    which means that a transducer will not be able to sabotage the specification by producing inconsistent words.

    The following observation resembles Observation~\ref{obs:path-correspondence-AV},
    but focuses on $k_T$-register transducers.

    \begin{obs}\label{obs:paths-AT-ATW}
      For every universal co-B\"uchi $k_A$-register automaton $A$, $k_T\in\bbN$:
      \li
      \- every data-path of $\AT$ has exactly one corresponding Boolean path in $\ATW$;
      \- every Boolean path of $\ATW$ has either one or infinitely many corresponding data-paths in $\AT$.
      \il
    \end{obs}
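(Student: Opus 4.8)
The plan is to re-run the proof of Observation~\ref{obs:path-correspondence-AV} almost verbatim, accounting for the two ways $\ATW$ differs from $A_\bbB @ V_k$: the verifier component now maintains a partition of the enlarged register set $\mc R^A \cupdot \mc R^T$ rather than of $\mc R^A$ alone, and the data-output guards over $\mc R^T$ have been replaced by the index signals $O_{k_T}$. In the path correspondence, the role formerly played by the $\mc o$-guard letter is now played by an index letter $\tilde t \in 2^{O_{k_T}}$ encoding some $t \in [k_T]$.

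\emph{First item.} From a data-path $(q_0,\bar{\mc d}_0) \lblTo{(l_0,\mc i_0,\mc o_0,\bar a_0)} (q_1,\bar{\mc d}_1) \to \cdots$ of $\AT$ I read off its Boolean image step by step: keep $l_j$; let the $G_\mc i$-letter encode $(\mc i_j = \bar{\mc d}_j[m])_{m\in[k_A+k_T]}$; let the $\Asgn$-letter encode $\bar a_j$; and, in place of the old $\mc o$-guard, emit an index $\tilde t_j$ encoding some $t_j\in[k_T]$ with $\mc o_j$ equal to the value then stored in register $\mc r^T_{t_j}$. Attach to $q_j$ the partition $\pi_j$ of $\mc R^A \cupdot \mc R^T$ induced by $\bar{\mc d}_j$; since $W$ is deterministic and its update is precisely the one that keeps such a partition in sync with the stored values (the $e'_{mn}$-formula in the definition of $V_k$), the sequence $(\pi_0,q_0) \to (\pi_1,q_1) \to \cdots$ is a legal Boolean path of $\ATW$, and it is the unique one matching the data-path once the realising indices $t_j$ are fixed (a register transducer, which is what $\ATW$ is eventually matched against, always commits to one such index). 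The one genuinely new wrinkle: if some $\mc o_j$ equals no $\mc R^T$-register then $\Tall$'s component moves to $\lightning$ and, by construction of $\delta'$ (the case $J=\emptyset$), no Boolean transition mirrors it; the ``exactly one'' is therefore to be read for data-paths whose data-output is at each step realised by some $\mc R^T$-register, with the realising indices as the only remaining degree of freedom --- exactly as the $\mc i$- and $\mc o$-values were the degrees of freedom in Observation~\ref{obs:path-correspondence-AV}.

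\emph{Second item.} Conversely, given a Boolean path $(\pi_0,q_0) \lblTo{l_0\cup g_\mc i{}_0\cup\tilde t_0\cup a_0} (\pi_1,q_1) \to \cdots$ of $\ATW$ I construct a data-path of $\AT$: set $\bar{\mc d}_0 = \mc d_0^{k_A+k_T}$, let $\bar a_j$ encode $a_j$ and update $\bar{\mc d}_{j+1}$ from $\bar{\mc d}_j,\mc i_j,\bar a_j$ in the usual way; set $\mc o_j$ to be the value stored in $\mc r^T_{t_j}$ within $\bar{\mc d}_j$, where $\tilde t_j$ encodes $t_j$ --- a well-defined data-value that makes the $\mc R^T$-output guard non-empty, so $\Tall$'s component stays in $q_0$; and choose $\mc i_j$ to realise the guard encoded by $g_\mc i{}_j$. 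As in Observation~\ref{obs:path-correspondence-AV}, such an $\mc i_j$ exists because $g_\mc i{}_j$ is compatible with $\pi_j$; it is \emph{forced} when $g_\mc i{}_j$ selects a non-empty $\mc R^A\cupdot\mc R^T$-class, and there are infinitely many admissible values (fresh ones, available since $\mc D$ is infinite) exactly when $g_\mc i{}_j = \false^{k_A+k_T}$. One then checks that the resulting run respects $\delta$, i.e.\ $\delta^A$ on the $\mc R^A$-part and $\delta^T$ on the $\mc R^T$-part, and that each $\pi_j$ is the partition induced by $\bar{\mc d}_j$; hence the Boolean path has exactly one corresponding data-path when every $g_\mc i{}_j$ is non-empty, and infinitely many otherwise.

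\emph{Main obstacle.} The delicate part is the interplay between the index-signal reformulation and the $\Tall$ component: one must ensure that a chosen output index is simultaneously realisable by the current register content and sufficient to keep $\Tall$ out of its rejecting sink, and that discarding the $\mc R^T$-output guards costs no partition information --- which holds because the verifier's successor partition never depended on the $\mc o$-guard to begin with. Everything else (partition-versus-$\bar{\mc d}$ synchronisation under assignments, and checking the constructed run against $\delta^A$ and $\delta^T$) is the same routine induction as for Observation~\ref{obs:path-correspondence-AV}.
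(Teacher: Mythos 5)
Your proof is correct and takes exactly the route the paper intends: the paper omits the proof of this observation, saying only that it resembles Observation~\ref{obs:path-correspondence-AV}, and your argument is precisely that adaptation (partition over $\mc R^A \cupdot \mc R^T$, an index letter from $2^{O_{k_T}}$ in place of the $\mc o$-guard whose dropping is harmless since the verifier's successor partition never reads $g_\mc o$, and $\mc i_j$ forced unless its guard is $\false^{k_A+k_T}$, which yields the ``one or infinitely many'' dichotomy). The two wrinkles you flag --- index ambiguity when several $\mc R^T$-registers hold the output value, and the absence of any mirroring transition when none does (the $J=\emptyset$ case, i.e.\ the $\lightning$-bound data-paths of $\AT$) --- are genuine imprecisions in the paper's ``exactly one,'' and the reading you adopt is the one under which the observation is actually used in Lemma~\ref{lem:rej-words}.
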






  \subsection{Synthesis Using Automaton $\hideATW$}

    We cannot use $\ATW$ for synthesis,
    because it uses Boolean signals that are not visible to transducers (underlined):
    $I \cup O \cup \underline{G_\mc i^A} \cup G_\mc i^T \cup \underline{G_\mc o^A} \cup O_{k_T} \cup \underline{\Asgn^A} \cup \Asgn^T$.
    Let us show that the simple hiding operation resolves the issue.

    Given $\ATW = \tpl{P, Q, q_0, \delta, F}$ with $P = I \cup O \cup G_\mc i^A \cup G_\mc i^T \cup G_\mc o^A \cup O_{k_T} \cup \Asgn^A \cup \Asgn^T$,
    the automaton $\hideATW$ is a universal co-B\"uchi automaton $\tpl{P', Q, q_0, \delta', F}$
    with $P' = I \cup O \cup G_\mc i^T\!\cup O_{k_T}\!\cup \Asgn^T$ and
    $$
      \delta': Q \x 2^I \x 2^O \x 2^{G_\mc i^T} \x 2^{O_{k_T}} \x 2^{\Asgn^T} \to  2^Q
    $$
    consists of transitions $q \lblTo{(i,o,g_\mc i^T,j,a^T)} Q'$ that satisfy the following:
    the destination set $Q' \subseteq Q$ contains all successor states
    of every transition of $\ATW$ starting in $q$ and having the same common labels:
    $$
      Q' ~= \!\!\!\!\!\!\!\!\!\!\!\!\!\!\!\!\!\!\!\!\!\!\!\!\!
            \bigcup_{g_\mc i^A \in 2^{G_\mc i^A},g_\mc o^A \in 2^{G_\mc o^A},a^A\in 2^{\Asgn^A}}
           \!\!\!\!\!\!\!\!\!\!\!\!\!
           \!\!\!\!\!\!\!\!\!\!\!\!\!
            \delta(q,i,o,g_\mc i^A, g_\mc i^T,g_\mc o^A,j,a^T,a^A).
    $$

    \begin{obs}\label{obs:paths-ATW-hideATW} \
      For every universal co-B\"uchi register automaton $A$, $k_T \in \bbN$:
      \li
      \- every path of $\ATW$ corresponds to exactly one path of $\hideATW$;
      \- every path of $\hideATW$ corresponds to at least one path of $\ATW$.
      \il
    \end{obs}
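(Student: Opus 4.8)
The plan is to unfold the definition of $\hideATW$ and track paths in both directions. First I would fix a path $\rho$ of $\ATW$, say $q_0 \lblTo{(i_0,o_0,g_{\mc i}^A{}_0,g_{\mc i}^T{}_0,g_{\mc o}^A{}_0,j_0,a^T_0,a^A_0)} q_1 \lblTo{\ \cdots\ } \cdots$, and define the candidate corresponding path of $\hideATW$ to be the one obtained by erasing the hidden labels $G_\mc i^A, G_\mc o^A, \Asgn^A$ from each letter, keeping $(i_t,o_t,g_\mc i^T{}_t,j_t,a^T_t)$ and the same state sequence $q_0 q_1 \cdots$. I must check this is actually a path of $\hideATW$: for each $t$, the transition $q_t \lblTo{(i_t,o_t,g_\mc i^T{}_t,j_t,a^T_t)} Q'_t$ of $\delta'$ has $Q'_t$ equal to the big union of $\delta(q_t,\ldots)$ over all hidden choices; since $q_{t+1}$ is one successor for the particular hidden choice appearing in $\rho$, we have $q_{t+1}\in Q'_t$, so the erased sequence satisfies $\delta'$. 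Uniqueness is immediate because the erasure map on letters is a function, so $\rho$ determines its $\hideATW$-image; this gives the first bullet.

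For the second bullet I would fix a path $\sigma = q_0 \lblTo{(i_0,o_0,g_\mc i^T{}_0,j_0,a^T_0)} q_1 \lblTo{\ \cdots\ } \cdots$ of $\hideATW$ and build a path of $\ATW$ over it. For each $t$, since $(q_{t+1}\in)\ \delta'(q_t,i_t,o_t,g_\mc i^T{}_t,j_t,a^T_t)$ and that set is the union of $\delta(q_t, i_t,o_t,g_\mc i^A, g_\mc i^T{}_t, g_\mc o^A, j_t, a^T_t, a^A)$ over the hidden components, there exist witnesses $g_\mc i^A{}_t, g_\mc o^A{}_t, a^A_t$ such that $q_{t+1}$ is a successor of the corresponding $\ATW$-transition from $q_t$. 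Choosing such witnesses for every $t$ (using dependent choice, which is fine here) yields a path of $\ATW$ whose letters, after erasing the hidden parts, give back $\sigma$; hence $\sigma$ corresponds to at least one path of $\ATW$. I would note in passing that ``corresponds'' here is exactly ``equal after projecting away $G_\mc i^A\cup G_\mc o^A\cup \Asgn^A$ and keeping the same state sequence'', i.e.\ the same erase-then-match relation used implicitly in the definition of $\delta'$.

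I do not expect any serious obstacle; the statement is essentially a restatement of how $\delta'$ was defined as a union over hidden labels, so the proof is a straightforward unwinding. The one point that deserves a sentence of care is the direction ``$\hideATW$-path $\Rightarrow$ $\ATW$-path'': one must make the hidden-label choices \emph{coherently along the whole infinite path}, i.e.\ it is a sequence of independent choices rather than a single one, so the successor state $q_{t+1}$ coming from the $\hideATW$-path is re-used as the starting state of the next $\ATW$-transition; the choices at different positions do not interact, which is why an arbitrary selection works. Everything else is bookkeeping on letters.
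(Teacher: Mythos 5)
Your proposal is correct and follows essentially the same route as the paper's own proof: the first item is immediate from the definition of $\delta'$ as a union over the hidden labels, and the second item is obtained by choosing, for each transition $q_j \lblToS{\sigma_j} q_{j+1}$ of $\hideATW$, an arbitrary witnessing transition of $\ATW$ with the same endpoints that agrees with $\sigma_j$ on the shared signals. The remark about making these choices independently at each position is exactly the point the paper also emphasizes.
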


    \begin{proof}
      The first item follows from the definition of $\hideATW$.

      Consider the second item.
      Fix a path $p = q_1 \lblToS{\sigma_1} q_2 \lblToS{\sigma_2} ...$ of $\hideATW$.
      By definition, for every transition $q_j \lblToS{\sigma_j} q_{j+1}$ of $\hideATW$,
      there must be some transition $q_j \lblToS{\sigma'_j} q_{j+1}$ of $\ATW$,
      where $\sigma'_j$ and $\sigma_j$ agree on the values of shared signals.
      Hence, in order to get the desired path of $\ATW$, we do the following:
      for every $j$, \emph{arbitrary} choose
      $\sigma'_j \in 2^{I \cup O \cup {G_\mc i^A} \cup G_\mc i^T \cup {G_\mc o^A} \cup O_{k_T} \cup {\Asgn^A} \cup \Asgn^T}$
      that satisfies $\delta_{\ATW}$ and agrees with $\sigma_j \in 2^{I \cup O \cup G_\mc i^T \cup O_{k_T} \cup \Asgn^T}$
      on the values of shared signals.
    \end{proof}

    \begin{lemma}\label{lem:rej-words}
      For every $k_T$-register transducer $T$ and universal co-B\"uchi $k_A$-register automaton $A$:
      $$
        \big(\exists w \in L(T): w \not\models A\big)  ~\Iff~  \big(\exists w_\bbB \in L(T_\bbB): w_\bbB \not\models \hideATW\big).
      $$
    \end{lemma}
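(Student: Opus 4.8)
The statement is an equivalence between "$T$ has a bad data-word with respect to $A$" and "$T_\bbB$ has a bad Boolean word with respect to $\hideATW$". I would prove it by chaining together the path/word correspondences established in the earlier observations, threading a word through the construction pipeline $T \leadsto A\ox\Tall \leadsto \ATW \leadsto \hideATW$ and back. The key is that each arrow either preserves the relevant acceptance/rejection information or is handled by one of Observations~\ref{obs:paths-AT-ATW} and~\ref{obs:paths-ATW-hideATW}.

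**Forward direction ($\Rightarrow$).** Suppose $w \in L(T)$ with $w \not\models A$. Since $w$ is produced by the $k_T$-register transducer $T$, by the observation characterizing $\Tall$ (and the product observation), the enhancement $\hat w$ of $w$ with $T$'s assignment actions $\Asgn^T$ satisfies $\hat w \models \Tall$; combined with $w \not\models A$ we get $\hat w \not\models A\ox\Tall = \AT$. So there is a rejecting data-path of $\AT$ over $\hat w$ — one that visits $F$ infinitely often. By Observation~\ref{obs:paths-AT-ATW} (first item), this data-path has a unique corresponding Boolean path in $\ATW$; since the $F$-component is preserved under the correspondence (the state set is unchanged, only guards/assignments are rewritten and $\mc o$-guards replaced by $O_{k_T}$ signals), that Boolean path is also rejecting. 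Projecting it down via Observation~\ref{obs:paths-ATW-hideATW} (first item) gives a rejecting path of $\hideATW$ over the projected Boolean word $w_\bbB$. It remains to check that $w_\bbB \in L(T_\bbB)$: the Boolean word we extracted uses exactly the signals $I \cup G_\mc i^T$ as inputs and $O \cup \Asgn^T \cup O_{k_T}$ as outputs, and its $G_\mc i^T$/$\Asgn^T$/$O_{k_T}$ components were determined by $T$'s guards, stores, and chosen output register — i.e. it is precisely the Boolean associate word of $\hat w$, hence in $L(T_\bbB)$. So $w_\bbB \in L(T_\bbB)$ and $w_\bbB \not\models \hideATW$.

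**Backward direction ($\Leftarrow$).** Suppose $w_\bbB \in L(T_\bbB)$ with $w_\bbB \not\models \hideATW$, so there is a rejecting path of $\hideATW$ over $w_\bbB$. By Observation~\ref{obs:paths-ATW-hideATW} (second item) it lifts to at least one path of $\ATW$, still rejecting (same states, $F$ unchanged). By Observation~\ref{obs:paths-AT-ATW} (second item) this Boolean path of $\ATW$ has at least one corresponding data-path in $\AT$, again rejecting. Let $\hat w$ be the data-word of that path; then $\hat w \not\models \AT$, so by the product observation either $\hat w \not\models \Tall$ or $\hat w|_{2^{P^A}} \not\models A$. The first case must be excluded: because $w_\bbB \in L(T_\bbB)$, the $G_\mc i^T$ and $\Asgn^T$ components of $w_\bbB$ are consistent with the behaviour of the $k_T$-register transducer $T$, and the $O_{k_T}$ component always names an actual register — so any data-word corresponding to it respects $\Tall$ (the $\Asgn^T$ signals are synced with genuine stores, and $\mc o$ always takes a register value). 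Hence $\hat w \models \Tall$, forcing $w := \hat w|_{2^{P^A}} \not\models A$; and since $\hat w$ came from $T_\bbB$'s behaviour over guards that $T$ can realize, $w \in L(T)$. This gives $w \in L(T)$ with $w \not\models A$.

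**Main obstacle.** The routine part is the path-chasing; the delicate part is the backward direction's argument that a data-path of $\AT$ lifted from a $T_\bbB$-word cannot fall out of $\Tall$. One must be careful that when we "arbitrarily" lift through $\hideATW \to \ATW$ (choosing the hidden $G_\mc i^A, G_\mc o^A, \Asgn^A$) and then pick a corresponding data-path, the freedom is only in the $\mc R^A$-registers and in choosing concrete data-values for unconstrained $\mc i$/$\mc o$ — the $\mc R^T$-behaviour and the $\mc o$-from-register property are pinned down by $w_\bbB \in L(T_\bbB)$, so $\Tall$ is never violated. Making this precise — i.e. that "$w_\bbB$ is the Boolean associate of some genuine $T$-word $w$ enhanced with $\Asgn^T$" — is the crux, and it is exactly where the syncing of $\Asgn^T$ with store actions and the $[k_T]$-valued output signal in the definition of $\Tall$ and $T_\bbB$ pay off.
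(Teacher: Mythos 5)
Your proposal is correct and follows essentially the same route as the paper: both directions are obtained by chasing a rejected path through the chain $T \leadsto \AT \leadsto \ATW \leadsto \hideATW$ using Observations~\ref{obs:paths-AT-ATW} and~\ref{obs:paths-ATW-hideATW}, with the crux (that the lifted data-path cannot fall out of $\Tall$ and that the projected word is genuinely a word of $T$) handled exactly as the paper does via the syncing of $\Asgn^T$ and the $O_{k_T}$ signals. Your write-up is in fact somewhat more explicit about this crux than the paper's own proof, which dispatches it with a one-sentence remark.
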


    \begin{proof}
      Both directions follow from the definitions and Observations~\ref{obs:paths-AT-ATW}~and~\ref{obs:paths-ATW-hideATW}.

      Consider direction $\Leftarrow$.
      The word
      $w_\bbB\in (2^{I \cup O \cup G_\mc i^T \cup O_{k_T} \cup \Asgn^T})^\omega$
      induces a path
      $\pi_{t_b} \in (S\x 2^{I\cup O \cup G_\mc i^T \cup O_{k_T} \cup \Asgn^T})^\omega$ on $T_\bbB$ and
      a rejected path
      $\pi_h \in (Q_h \x 2^{I\cup O \cup G_\mc i^T \cup O_{k_T} \cup \Asgn^T})^\omega$
      on $\hideATW$.
      By Observation~\ref{obs:paths-ATW-hideATW}, $\pi_h$ corresponds to at least one path
      $\pi_{atw} \in (Q_h \x 2^{I \cup O \x G_\mc i^A \cup G_\mc i^T \cup G_\mc o^A \cup O_{k_T} \cup \Asgn^A \cup \Asgn^T})^\omega$
      of $\ATW$.
      By Observation~\ref{obs:paths-AT-ATW}, $\pi_{atw}$ corresponds to at least one data-path
      $\pi_{at} \in (Q_{at} \x 2^{I \cup O \cup \Asgn^T} \x \mc D^2)^\omega$ of $\AT$,
      which is rejected by $A$,
      because $\pi_h$ is rejected by $A_\bbB$.
      Thus, we get $w \in (2^{I \cup O} \x \mc D^2)^\omega$ from $\pi_{at}$ by projecting,
      which completes the direction.
      Notice that a data-path
      $\pi_t \in (S \x 2^{I \cup O \cup \Asgn^T} \x \mc D^2)^\omega$ of $T$
      induced by $w$
      corresponds to the Boolean path $\pi_{t_b}$ of $T_\bbB$
      induced by $w_\bbB$,
      despite the particular choices of $\pi_{atw}$ and $\pi_{at}$.

      The other direction is similar.
    \end{proof}

    The lemma implies a solution to the bounded synthesis problem.
    \begin{theorem}\label{thm:synthesis-reduction}
      For every universal co-B\"uchi register automaton $A$ and $k_T \in \bbN$:
      $$
        \big(\exists T: T \models A\big)  ~\Iff~  \big(\exists T_\bbB: T_\bbB \models \hideATW\big),
      $$
      where $T$ is a $k_T$-register transducer.
    \end{theorem}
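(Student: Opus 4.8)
The plan is to read the theorem off Lemma~\ref{lem:rej-words} by contraposition, using that a $k_T$-register transducer and its Boolean associate (Section~\ref{sec:associates}) carry the same information and live over exactly the alphabet that $\hideATW$ expects. Unfolding the definition of~$\models$: $T\models A$ means that no word of $T$ is rejected by $A$, i.e.\ $\neg\big(\exists w\in L(T): w\not\models A\big)$, and likewise $T_\bbB\models\hideATW$ means $\neg\big(\exists w_\bbB\in L(T_\bbB): w_\bbB\not\models\hideATW\big)$. Hence Lemma~\ref{lem:rej-words}, negated on both sides, says precisely: for every $k_T$-register transducer $T$ with Boolean associate $T_\bbB$, we have $T\models A \Iff T_\bbB\models\hideATW$. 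One should check that this is well-typed: $T_\bbB$ has Boolean inputs $I\cup G^T_\mc i$ and Boolean outputs $O\cup\Asgn^T\cup O_{k_T}$, which is the alphabet of $\hideATW$ (the transducer controlling $O,\Asgn^T,O_{k_T}$ and reading $I,G^T_\mc i$).

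Granting this per-transducer equivalence, the ``$\Rightarrow$'' direction is immediate: given a $k_T$-register transducer $T$ with $T\models A$, its associate $T_\bbB$ witnesses $\exists T_\bbB: T_\bbB\models\hideATW$. For ``$\Leftarrow$'' I need the converse map: from a register-less transducer $T_\bbB$ over the above alphabet with $T_\bbB\models\hideATW$, build a $k_T$-register transducer $T$ whose Boolean associate is this $T_\bbB$; then Lemma~\ref{lem:rej-words} applied to $T$ yields $T\models A$. This $T$ is just the inverse of the associate construction: $T$ keeps the states and initial state of $T_\bbB$, and a transition $(s,\,l\cup g_\mc i,\,o\cup o_k\cup a,\,s')$ of $\tau_\bbB$ is turned into $\tau(s,l,\bar b_\mc i)=(o,\tilde o_k,\bar a,s')$, where $\bar b_\mc i,\bar a\in\bbB^{k_T}$ and $\tilde o_k\in[k_T]$ are the guard, assignment and register index encoded by $g_\mc i$, $a$, $o_k$.

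The one delicate point, and the main obstacle I would need to make watertight, is that this inversion is not literally total: some valuations of $O_{k_T}$ encode no index of $[k_T]$, and on such letters $\hideATW$ (like $\ATW$) has no outgoing transition and therefore accepts vacuously, so an arbitrary $T_\bbB$ satisfying $\hideATW$ need not be a genuine associate. I would remove this friction at the source---e.g.\ by fixing a \emph{total} decoding $2^{O_{k_T}}\to[k_T]$ once and for all (without loss of generality), or by routing the garbage $O_{k_T}$ letters into the rejecting sink already in $\ATW$, which forces any $T_\bbB\models\hideATW$ never to emit them. With that convention the backward construction is defined on every relevant $T_\bbB$, and the rest is the bookkeeping already packaged in Lemma~\ref{lem:rej-words} and Observations~\ref{obs:paths-AT-ATW} and~\ref{obs:paths-ATW-hideATW}; no new path-correspondence argument is required.
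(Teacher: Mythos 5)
Your proof follows the paper's route exactly: the paper derives the theorem from Lemma~\ref{lem:rej-words} in a single sentence (``The lemma implies a solution to the bounded synthesis problem''), and your contrapositive unfolding of $\models$ together with the inverse-associate construction for the ``$\Leftarrow$'' direction is precisely the elaboration that sentence leaves implicit. The subtlety you flag about $2^{O_{k_T}}$ valuations that encode no index of $[k_T]$ is real and is not addressed in the paper; either of your proposed conventions (a total decoding, or routing such letters to a rejecting sink in $\ATW$) closes it cleanly, so this is a refinement of the paper's argument rather than a divergence from it.
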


    The right side of the theorem (the standard Boolean synthesis problem)
    holds
    iff
    it holds for finite-state transducers (e.g., see \cite{DBLP:conf/popl/PnueliR89}).
    Hence we get:
    
    \begin{cor}
      A given instance of the bounded synthesis problem is realizable
      $\Iff$
      it is realizable by a finite-state ($|S| < \infty$) register transducer.\ak{how many states?}
    \end{cor}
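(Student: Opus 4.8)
The direction $(\Leftarrow)$ needs no work: a finite-state register transducer is in particular a $k_T$-register transducer, so realizability by a finite-state one trivially entails realizability. For $(\Rightarrow)$ the plan is to chain Theorem~\ref{thm:synthesis-reduction} with the classical fact that realizability of standard (Boolean) synthesis transfers to finite state, and then to convert the resulting finite-state Boolean transducer back into a finite-state register transducer.

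Concretely, assume the given instance is realizable, i.e.\ there is a $k_T$-register transducer $T$ (a priori with $|S| = \infty$) such that $T \models A$. Theorem~\ref{thm:synthesis-reduction} then yields a register-less transducer $T_\bbB$ with $T_\bbB \models \hideATW$. Since $\hideATW$ is an ordinary register-less universal co-B\"uchi automaton over the finite alphabet $2^{I \cup O \cup G_\mc i^T \cup O_{k_T} \cup \Asgn^T}$, the statement ``$\exists T_\bbB : T_\bbB \models \hideATW$'' is exactly an instance of standard synthesis, so by the remark preceding the corollary (\cite{DBLP:conf/popl/PnueliR89}) I may assume $T_\bbB$ is finite-state.

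The remaining step is to turn such a finite-state $T_\bbB$ into a \emph{finite-state} $k_T$-register transducer $T'$ with $T' \models A$; this is essentially the $\Leftarrow$ direction of Theorem~\ref{thm:synthesis-reduction}, and I would re-inspect that direction to confirm it is effective and finiteness-preserving. The natural construction keeps the state set of $T_\bbB$: in state $s$, on Boolean input $i \in 2^I$ and a data-input whose comparison with $T'$'s registers yields the guard $\bar b_\mc i$, feed $T_\bbB$ the letter $i \cup g_\mc i^T$ where $g_\mc i^T \in 2^{G_\mc i^T}$ encodes $\bar b_\mc i$, read off $T_\bbB$'s reaction $o \cup o_k \cup a$ and successor $s'$, and let $T'$ output $o$, store the data-input into the registers selected by $a \in 2^{\Asgn^T}$, output the content of register $\# j$ where $j \in [k_T]$ is the number encoded by $o_k$, and move to $s'$. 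Since $T_\bbB$ is finite-state, so is $T'$; and $T' \models A$ should then follow from Observations~\ref{obs:paths-AT-ATW} and~\ref{obs:paths-ATW-hideATW} together with direction $\Leftarrow$ of Lemma~\ref{lem:rej-words}, because a word of $T'$ violating $A$ would push back, through the Boolean associate of $T'$, to a word of $T_\bbB$ rejected by $\hideATW$, contradicting $T_\bbB \models \hideATW$.

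The main (though routine) obstacle is this last step. An arbitrary finite-state $T_\bbB$ satisfying $\hideATW$ need not literally be the Boolean associate of a register transducer: nothing in the definition forbids it from emitting $O_{k_T}$-values outside $[k_T]$, which no associate ever does, and one has to pin down the $O_{k_T}$-encoding convention (and, relatedly, whether $\hideATW$ is to be read as completed to a rejecting sink on such letters) so that replacing out-of-range $o_k$ by a canonical in-range value preserves both $T_\bbB \models \hideATW$ and the state count. Once $T_\bbB$ is normalized to be a genuine associate, $T'$ is exactly the register transducer whose associate is $T_\bbB$, and Lemma~\ref{lem:rej-words} together with Theorem~\ref{thm:synthesis-reduction} closes the argument. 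Everything beyond this bookkeeping is a direct appeal to results already established, so no new ideas are needed.
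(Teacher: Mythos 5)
Your argument is correct and follows essentially the same route as the paper, which derives the corollary directly from Theorem~\ref{thm:synthesis-reduction} together with the classical fact that Boolean synthesis realizability transfers to finite-state transducers. The only addition you make is to spell out explicitly that the back-conversion from a finite-state $T_\bbB$ to a register transducer is state-preserving (modulo the routine normalization of out-of-range $O_{k_T}$ letters), which the paper leaves implicit in the $\Leftarrow$ direction of Theorem~\ref{thm:synthesis-reduction}.
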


    Let us consider the complexity of our approach.
    The automaton $\hideATW$ has $|Q_A| \cdot |\Pi|$ states,
    where $Q_A$ is the number of states in $A$ and $|\Pi|$ is the number of partitions of the set $\{1, ..., k\}$ where $k = k_T + k_A$.
    The latter is a Bell number~\cite{bell-number} and is
    less than $(\frac{0.792k}{\ln(k+1)})^k$~\cite[Thm~2.1]{berend2010improved}.
    Hence the number of states in $\hideATW$ is less than $|Q_A|\cdot (\frac{0.792k}{\ln(k+1)})^k$,
    and the complexity of our approach is in $synth(|Q_A|\cdot (\frac{0.792k}{\ln(k+1)})^k)$,
    where $synth(m) = 2^{c\cdot m}$ is the complexity of synthesis
    from a universal co-B\"uchi automaton with $m$ states~\cite[Thm.2]{DBLP:conf/popl/PnueliR89}
    ($c$ is a constant).
    This is an upper bound, the lower bound is open, thus we get:

    \begin{cor}
      The bounded synthesis problem can be solved in $2^{c\cdot |Q_A|\cdot (\frac{0.792k}{\ln(k+1)})^k}$ time,
      where $k = k_A + k_T$, $|Q_A|$ and $k_A$ is the number of states and registers in a given universal automaton,
      and $c$ is a constant.
    \end{cor}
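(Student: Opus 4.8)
The plan is to read the corollary as an accounting exercise layered on top of Theorem~\ref{thm:synthesis-reduction}. That theorem reduces the bounded synthesis problem to standard Boolean synthesis: $\exists T: T\models A$ iff $\exists T_\bbB: T_\bbB \models \hideATW$. Hence an algorithm solving the problem consists of (i) building the register-less universal co-B\"uchi automaton $\hideATW$ and (ii) running standard synthesis on it. Its running time is the construction cost plus $synth(|Q_h|)$, where $Q_h$ is the state set of $\hideATW$ and, by~\cite[Thm.2]{DBLP:conf/popl/PnueliR89}, $synth(m) = 2^{cm}$ for a universal co-B\"uchi automaton with $m$ states. I would argue that (ii) dominates, so it suffices to bound $|Q_h|$ and substitute.

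First I would count the states by tracing the chain of constructions. The product $\AT$ has state set $Q^A \x Q^T$ with $|Q^T|$ a constant; its Boolean associate $AT_\bbB$ has the same states; taking the product with the verifier $W$ multiplies the count by $|\Pi|$, the number of partitions of $\mc R^A \cupdot \mc R^T$; and the hiding operation $\hideATW = \tpl{P', Q, q_0, \delta', F}$ leaves the state set $Q$ unchanged, altering only the transition function. Therefore $|Q_h| = \Theta(|Q_A|\cdot|\Pi|)$, the constant $|Q^T|$ being absorbed into the exponent's constant $c$, so I would record $|Q_h| \le |Q_A|\cdot|\Pi|$ as in the surrounding text. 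Since $|\mc R^A \cupdot \mc R^T| = k_A + k_T = k$, the factor $|\Pi|$ is exactly the Bell number $B_k$~\cite{bell-number}.

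Next I would bound $B_k$. By~\cite[Thm~2.1]{berend2010improved} we have $B_k < \big(\tfrac{0.792\,k}{\ln(k+1)}\big)^k$, so $|Q_h| < |Q_A|\cdot\big(\tfrac{0.792\,k}{\ln(k+1)}\big)^k$. Plugging this into $synth(m)=2^{cm}$ gives the claimed bound $2^{\,c\cdot|Q_A|\cdot(0.792k/\ln(k+1))^k}$, completing the arithmetic.

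The step I expect to need the most care is confirming that the construction in (i) is genuinely dominated by (ii), so that it contributes no larger term. The alphabet $2^{P'}$ and the number of transitions of $\hideATW$ can be exponential in $k_T$, $|I|$, and $|O|$, and the union defining $\delta'$ ranges over $2^{G_\mc i^A}\x 2^{G_\mc o^A}\x 2^{\Asgn^A}$, which is exponential in $k_A$; but each of these is at most singly exponential in $|Q_h|$, whereas synthesis costs $2^{c|Q_h|}$, so after adjusting $c$ the construction time is swallowed by the synthesis time. Once this domination is verified, the corollary follows immediately from the state bound and the substitution above.
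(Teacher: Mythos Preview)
Your proposal is correct and follows essentially the same argument as the paper: count the states of $\hideATW$ as $|Q_A|\cdot|\Pi|$, identify $|\Pi|$ with the Bell number $B_k$ for $k=k_A+k_T$, bound it via~\cite[Thm~2.1]{berend2010improved}, and plug into $synth(m)=2^{cm}$ from~\cite[Thm.2]{DBLP:conf/popl/PnueliR89}. You are in fact slightly more careful than the paper, which silently drops the constant $|Q^T|$ factor and does not discuss the construction cost at all.
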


    Finally, Figure~\ref{fig:languages-diagram} depicts the relation between the languages of utilized automata.
    It shows that the approach makes use of determinizable subset of $\AT$.

    \begin{figure}[tb]
      \centering
      \begin{tikzpicture}

    \def\setA{(0,-0.05) rectangle (6,2.5)}
    \def\setNotA{(-6,-0.05) rectangle (0,2.5)}
    \def\setTall{(-4,0) rectangle (4,2)}
    \def\setAT{(0.04,0.04) rectangle (3.96,1.96)}
    \def\setTreat{(1.4,0.08) rectangle (3.92, 1.92) }
    \def\setT{(3.5,0.5) circle (0.3)}
    
    \begin{scope}[fill opacity=0.3]


        \draw[line width=0.17mm, black] \setA;
        \draw[line width=0.17mm, black] \setNotA;
        \draw[line width=0.20mm, blue] \setTall;
        \draw[line width=0.20mm, green] \setAT;
        \draw[line width=0.20mm, red] \setTreat;
        \draw[line width=0.20mm, red] \setT;

    \end{scope}

    \node [style=textual] (tA) at (5.5, 2.2) {$A$};
    \node [style=textual] (tNotA) at (-5.5, 2.2) {$\neg A$};
    \node [style=textual] (tAT) at (0.7, 1.7) {\color{Green}{$A \ox \Tall$}};
    \node [style=textual] (tTall) at (-3.5, 1.7) {\color{Blue}{$\Tall$}};

    \node [style=textual] (tTreat) at (2.7, 1.6) {\color{Red}{$\hideATW$}};

    \node [style=textual] (tT) at (3.5, 0.65) {\color{Red}{$T$}};

    \node [style=dot] (1) at (0.8, 0.8) {};
    \node [style=textual] (t1) at (0.95, 0.8) {$1$};

    \node [style=dot] (2) at (-0.8, 0.8) {};
    \node [style=textual] (t2) at (-0.95, 0.8) {$2$};

    \draw [style=arrow,decorate,decoration={snake,amplitude=.3mm,segment length=2mm}] (1) to (2);

\end{tikzpicture}
      \vspace{-2mm}
      \caption{%
        Inclusion between languages.
        The automaton $\hideATW$ is Boolean, but here it is viewed as a register automaton.
        Also, the alphabet of $A$ is extended with $\Asgn^T$ to coincide with that of $\AT$ and $\hideATW$.
        Figure~\ref{fig:proof-automaton-4-5} justifies the existence of point 1,
        which explains why $\hideATW$ can be a \emph{strict} subset of $\AT$.
        The snake line indicates
        ``for every $T$: if it has point 1, then it also has point 2''
        (by Lemma~\ref{lem:rej-words}).
        Thus, if $T \models A$ for some $k_T$-register transducer,
        then it must be located inside $\hideATW$.%
      }
      \label{fig:languages-diagram}
    \end{figure}
    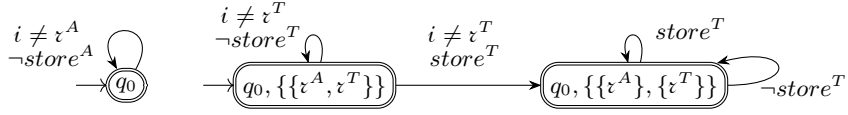
\begin{figure}[tb]
      \centering
      \begin{tikzpicture}
	\begin{pgfonlayer}{nodelayer}
		\node [double, initial, style=text rounded] (0) at (-2, 1.25) {$q_0$};
		\node [style=textual, yshift=0mm] (1) at (-3, 1.75) {\specialcellC{$\mc i \neq \mc r^A$ \vspace{-0.8mm} \\ $\neg store^A$}};
		\node [style=text rounded, initial, double] (2) at (0.5, 1.25) {$q_0,\{ \{\mc r^A, \mc r^T\}\}$};
		\node [double, style=text rounded] (3) at (4.75, 1.25) {$q_0, \{\{\mc r^A\}, \{\mc r^T\}\}$};
		\node [yshift=0mm, style=textual] (4) at (-0.25, 2) {\specialcellC{$\mc i \neq \mc r^T$ \vspace{-0.8mm} \\ $\neg store^T$}};
		\node [style=textual, yshift=0mm] (5) at (2.5, 1.75) {\specialcellC{$\mc i \neq \mc r^T$ \vspace{-0.8mm} \\ $store^T$}};
		\node [yshift=0mm, style=textual] (6) at (5.5, 2) {$store^T$};
		\node [style=textual, yshift=0mm] (7) at (7, 1.25) {$\neg store^T$};
	\end{pgfonlayer}
	\begin{pgfonlayer}{edgelayer}
		\draw [style=arrow, in=120, out=60, loop] (0) to ();
		\draw [style=arrow, in=105, out=75, loop] (2) to ();
		\draw [style=arrow, in=105, out=75, loop] (3) to ();
		\draw [style=arrow] (2) to (3);
		\draw [style=arrow, in=15, out=0, loop] (3) to ();
	\end{pgfonlayer}
\end{tikzpicture}
      \vspace{-2mm}
      \caption{Universal co-B\"uchi register automata to show the existence of point 1 in Fig.\ref{fig:languages-diagram}.
        On the left is 1-register automaton $A$:
        it accepts the words where at some moment the signal $\mc i$ equals to $\mc d_0$ (and no restrictions on the values of $\mc o$).
        On the right is $\hideATW$ where $k_T=1$:
        when viewed as a register automaton,
        it accepts the words where the first value of $\mc i$ is $\mc d_0$ (plus some restrictions on $\mc o$).
        Hence, $L(\hideATW) \subsetneq L(A \ox \Tall)$.
        The labels related to $\mc o$ are omitted.}
      \label{fig:proof-automaton-4-5}
    \end{figure}

\section{Using Temporal Logic in our Synthesis Approach}\label{sec:logic}
  
  We proceed to the topic of synthesis of register transducers from a temporal logic.
  Section~\ref{sec:ltleq} defines a first-order linear temporal logic with equality,
  LTL(EQ)\footnote{The name LTL(EQ) is inspired by the names of logics in SMT-LIB~\cite{smtlib}.}
  and its variants $\exists$LTL(EQ) and $\forall$LTL(EQ),
  known as IPTL in \cite{Wolper-data-indep} and VLTL in \cite{grumberg2012model}.
  Then Section~\ref{sec:gue-atm} defines register-guessing automata
  that can express $\exists$LTL(EQ) formulas.
  The sound and complete conversion of $\exists$LTL(EQ) into register-guessing automata is described in Section~\ref{sec:ltleq-to-gue-atm}.
  Then Section~\ref{sec:ltleq-to-reg-atm} describes a sound but incomplete conversion of register-guessing automata into register automata,
  which implies the sound but incomplete conversion of $\exists$LTL(EQ) into register automata
  (no complete conversion can exist).
  The latter automata are consumed by our synthesizer.

  Unless explicitly stated, all automata are non-deterministic B\"uchi.

  \subsection{LTL(EQ) (also known as IPTL~\cite{Wolper-data-indep} and VLTL~\cite{grumberg2012model})}\label{sec:ltleq}
  
    Let $\mc X$ be a set of data-variables and $P$ be a set of Boolean propositions.
    An \emph{LTL(EQ) (prenex-quantified) formula} $\Phi$ is of the form (for every $k \in \bbN$):
    \begin{align*}
      \Phi  &~=~  \forall \mc x_1... \mc x_k. cond. \varphi \| \exists \mc x_1...\mc x_k. cond. \varphi \\
       cond &~=~  \true \| \mc x \neq \mc x \| cond \land cond \\
    \varphi &~=~  \true \|
                p \|
                \mc i = \mc x \| \mc o = \mc x \|
                \neg \varphi \| \varphi \land \varphi \| 
                \varphi \U \varphi \| \X \varphi
    \end{align*}
    where $\mc x_1, ..., \mc x_k, \mc x \in \mc X$,
    $p \in P$,
    $\mc i$ and $\mc o$ are two data-propositions,
    and all the data-variables appearing in $\varphi$ are quantified.
    As usual,
    define $\G \varphi$ to be $\neg\F\varphi$,
    $\F\varphi = \true \U \varphi$,
    $\varphi_1 \lor \varphi_2$ is $\neg (\neg\varphi_1 \land \neg\varphi_2)$,
    $\varphi_1 \impl \varphi_2$ is $\neg\varphi_1 \lor \varphi_2$, and
    $\false$ is $\neg \true$.

    Given $w = w_1 w_2 ... \in (2^P \x \mc D^{\{\mc i, \mc o\}})^\omega$,
    define the satisfaction $w \models \Phi$:
    \li
    \- $w \models \forall \mc x_1...\mc x_k. cond. \varphi$
       iff
       for all $\mc d_1,...,\mc d_k \in \mc D$
       either $cond[\mc x_1 \from \mc d_1,...,\mc x_k \from \mc d_k]$ does not hold or
       $w \models \varphi[\mc x_1\from \mc d_1, ..., \mc x_k \from \mc d_k]$;

    \- $w \models \exists \mc x_1...\mc x_k. cond. \varphi$
       iff
       there exists $\mc d_1,...,\mc d_k \in \mc D$
       such that $cond[\mc x_1 \from \mc d_1,...,\mc x_k \from \mc d_k]$ holds and
       $w \models \varphi[\mc x_1\from \mc d_1, ..., \mc x_k \from \mc d_k]$;

    \- let $\phi$ have the same grammar as $\varphi$ except that instead of data-variables it has data-values;
       then
    \- $w \models \true$;
    \- $w \not\models \phi$ iff $\neg(w \models \phi)$;
    \- $w \models \neg \phi$ iff $\neg(w \models\phi)$;
    \- $w \models p$ iff $p \in w_1$;
    \- $w \models \phi_1 \land \phi_2$ iff $w \models \phi_1$ and $w \models \phi_2$;
    \- for every $\mc d \in \mc D$,
       $w \models \mc i = \mc d$ iff in $w_1$ the data-proposition $\mc i$ has the value $\mc d$;
       similarly for $\mc o$;
    \- for $i \in \bbN$, let $w_{[i:]}$ denote $w$'s suffix $w_i w_{i+1} ...$; then
    \- $w \models \X \phi$ iff $w_{[2:]} \models \phi$; and
    \- $w \models \phi_1 \U \phi_2$ iff $\exists i \in \bbN: \big((w_{[i:]} \models \phi_2) \land (\forall j<i: w_{[j:]} \models \phi_1)\big)$.
    \il

    Let $\exists$LTL(EQ) denote LTL(EQ) where formulas have existential quantifiers only,
    and use $\forall$LTL(EQ) for universally quantified LTL(EQ) formulas.

  \subsection{Register Automata with Guessing but Without Storing}\label{sec:gue-atm}
    
    In this section we define a variation of register automata that
    have a non-deterministically chosen initial register values
    that cannot be rewritten afterwards.
    Such automata are a restricted version of variable automata~\cite{grumberg2010variable}.

    A \emph{$k$-register-guessing automaton} is a tuple $A = \VAtuple$
    (notice: no initial register value $\mc d_0$ and a new element $E$)
    with transition function $\delta$ of the form
    $
        Q \x
        2^{P} \x 
        \bbB_\mc i^k \x
        \bbB_\mc o^k
        \to
        2^{Q}
    $
    (notice: no assignment component on the right),
    where $E \subseteq \mc R \x \mc R$ is an \emph{inequality set}\footnote{%
      We can get away without using $E$ (by encoding it into $\delta$),
      but it proved to be convenient in Section~\ref{sec:ltleq-to-reg-atm}.%
    },
    while all other components are like for register automata. 
    A path is defined similarly to a path of a register automaton,
    except that
    \li
    \- an initial configuration $(q_0, \mc{\bar d}_0) \in \{q_0\} \x \mc D^k$ of the path is arbitrary
       provided that $\mc{\bar d}_0$ satisfies the inequality set:
       $\forall (\mc r_i, \mc r_j) \in E: \mc{\bar d}_0[i] \neq \mc{\bar d}_0[j]$; and
    \- the automaton never stores to the registers.
    \il
    An accepting word is defined as for register automata.

  \subsection{Converting $\exists$LTL(EQ) into Register-Guessing Automata}\label{sec:ltleq-to-gue-atm}

    This section describes the conversion of $\exists$LTL(EQ) formulas into register-guessing automata with the same language.
    The fact that a conversion is possible was noted in~\cite[Sec.4]{10.1007/978-3-319-57288-8_1},
    however they did not describe the conversion itself.

    Consider an $\exists$LTL(EQ) formula $\Phi = \exists \mc x_1...\mc x_k. cond. \varphi(\mc i, \mc o, \mc x_1, ..., \mc x_k)$.
    \newcommand\BoolArgs{g_{\mc i\mc r_1} ,..., g_{\mc i\mc r_k}, g_{\mc o\mc r_1} ,..., g_{\mc o\mc r_k}}
    We will use the notions of $w_\bbB$ and $\varphi_\bbB$ defined below.
    \li
    \-[($w_\bbB$)] Given a word $w \in (2^P \x \mc D^2)^\omega$ and $\mc x_1, ..., \mc x_k \in \mc D$,
       let $w_\bbB \in (2^P \x \bbB_\mc i^k \x \bbB_\mc o^k)^\omega$ be the word derived from $w$
       by replacing every value of $\mc i$ and $\mc o$ in $w$
       by the vectors of Boolean values, $(\mc i = \mc x_1,...,\mc i = \mc x_k)$ and $(\mc o = \mc x_1,...,\mc o = \mc x_k)$.

    \-[($\varphi_\bbB$)] In $\varphi(\mc i, \mc o, \mc x_1, ..., \mc x_k)$,
       replace every expression $\mc i = \mc x_i$ with a new literal $g_{\mc i\mc r_i}$
       and every expression $\mc o = \mc x_i$ with $g_{\mc o\mc r_i}$.
       This introduces $2k$ new Boolean propositions,
       let $P_\bbB = P \cup \{g_{\mc i\mc r_1} ,..., g_{\mc i\mc r_k}\} \cup \{g_{\mc o\mc r_1} ,..., g_{\mc o\mc r_k}\}$.
       Let $\varphi_\bbB(\BoolArgs)$ be the resulting LTL formula over Boolean propositions $P_\bbB$.
    \il
    \smallskip
    \noindent To convert a formula $\exists \mc x_1...\mc x_k.cond.\varphi$ into a $k$-register-guessing automaton $A$ do the following
    ({\bf conversion-1}).
    \li
    \- Convert $\varphi_\bbB$ into an NBW automaton $A_\bbB=\tpl{P_\bbB, Q, q_0, \delta_\bbB, F}$ using standard approaches.
       Thus, for every $w_\bbB \in 2^{P_\bbB}$:
       $w_\bbB \models A_\bbB$ iff $w_\bbB \models \varphi_\bbB$.

    \- Treat $A_\bbB$ as a $k$-register-guessing automaton $A = \VAtuple$,
       where $E$ is derived from $cond$.
    \il

    \noindent
    For example, the automaton in Figure~\ref{fig:reg-guessing-automaton} expresses the formula
    \begin{equation*}
      \neg
      \forall \mc x_1 \neq \mc x_2.
        \G\left[
               \begin{aligned}
                 & \mc i = \mc x_1 \land \X \mc i = \mc x_2 \impl \XX \neg e \\
                 & \mc i = \mc x_1 \land \X \mc i = \mc x_1 \impl \XX (e \land \mc o = x_1)
               \end{aligned}
          \right]
    \end{equation*}
    that says:
    compare the data-input $\mc i$ at two consecutive points and then
    (i) whenever they are equal, raise $e$ and output the data,
   (ii) otherwise, lower $e$.

    \begin{figure}[bt]
      \centering
      \begin{tikzpicture}
	\begin{pgfonlayer}{nodelayer}
		\node [style=wn, initial] (0) at (-1, 1) {$q_0$};
		\node [style=wn] (1) at (1, 2) {$q_1$};
		\node [style=textual] (2) at (-1, 2) {$\true$};
		\node [style=textual, rotate=25, yshift=-1mm, xshift=1mm] (3) at (-0.25, 1.75) {$\mc i = \mc r_1$};
		\node [style=wn] (4) at (3, 2) {$q_3$};
		\node [double, style=wn] (5) at (5, 1) {$q_5$};
		\node [style=wn] (6) at (1, -0) {$q_2$};
		\node [style=wn] (7) at (3, -0) {$q_4$};
		\node [style=textual] (8) at (5, 2) {$\true$};
		\node [xshift=1mm, yshift=-1mm, rotate=-25, style=textual] (9) at (-0.25, 0.5) {$\mc i = \mc r_1$};
		\node [style=textual] (10) at (2, 2.25) {$\mc i = \mc r_2$};
		\node [style=textual] (11) at (2, -0.25) {$\mc i = \mc r_1$};
		\node [xshift=1mm, yshift=-1mm, rotate=25, style=textual] (12) at (3.75, 0.75) {$\neg e$};
		\node [style=textual, rotate=-25, yshift=-1mm, xshift=1mm] (13) at (3.75, 2) {$e$};
		\node [style=textual, rotate=29] (14) at (4.5, -0) {$\mc o \neq \mc r_1$};
	\end{pgfonlayer}
	\begin{pgfonlayer}{edgelayer}
		\draw [style=arrow] (0) to (1);
		\draw [style=arrow, in=120, out=60, loop] (0) to ();
		\draw [style=arrow, in=120, out=60, loop] (5) to ();
		\draw [style=arrow] (0) to (6);
		\draw [style=arrow] (6) to (7);
		\draw [style=arrow, bend right=45, looseness=0.75] (7) to (5);
		\draw [style=arrow] (1) to (4);
		\draw [style=arrow] (4) to (5);
		\draw [style=arrow] (7) to (5);
	\end{pgfonlayer}
\end{tikzpicture}
      \caption{A 2-register-guessing automaton: $P = \{e\}$, $\mc R = \{\mc r_1, \mc r_2\}$, $E=\{(\mc r_1, \mc r_2)\}$.
        The edges have symbolic labels,
        e.g., the edge labeled with $\mc i = \mc r_1$ encodes 16 edges,
        for different valuations of $e$, $\mc i = \mc r_2$, $\mc o = \mc r_1$, and $\mc o = \mc r_2$.%
      }
      \label{fig:reg-guessing-automaton}
    \end{figure}
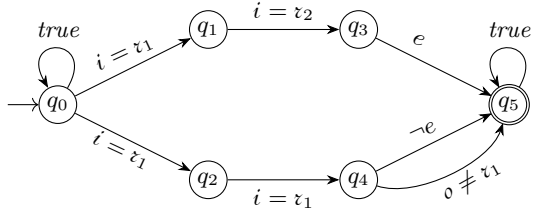

    \begin{obs}\label{obs:efoltl--gue-reg-atm}
      For every $w \in (2^P \x \mc D^2)^\omega$:
      $
        w \models A  ~\Iff~  w \models \exists \mc x_1...\mc x_k.cond.\varphi.
      $
    \end{obs}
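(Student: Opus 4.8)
The plan is to unfold the semantics on both sides and observe that the existential quantifier over data-values $\mc d_1,\ldots,\mc d_k$ in $\Phi = \exists \mc x_1\ldots \mc x_k. cond. \varphi$ corresponds exactly to the single non-deterministic guess of the initial register valuation $\bar{\mc d}_0$ in the register-guessing automaton $A$ (which, recall, never stores, so $\bar{\mc d}_0$ is kept along the whole path). Fix $w \in (2^P \x \mc D^2)^\omega$. The central claim I would isolate is: for every $\mc d_1,\ldots,\mc d_k \in \mc D$, letting $w_\bbB$ be the word over $2^{P_\bbB} \simeq 2^P \x \bbB_\mc i^k \x \bbB_\mc o^k$ obtained from $w$ by replacing each value of $\mc i$ and $\mc o$ with its comparison vector against $\mc d_1,\ldots,\mc d_k$ (as defined just above the observation), we have
$w \models \varphi[\mc x_1 \from \mc d_1,\ldots,\mc x_k \from \mc d_k]$ iff $w_\bbB \models \varphi_\bbB$.

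I would prove this claim by structural induction on $\varphi$. The cases $\true$ and $p \in P$ are immediate since $w$ and $w_\bbB$ agree on the $2^P$-component; the cases $\mc i = \mc x_m$ and $\mc o = \mc x_m$ hold by construction of $w_\bbB$, because the literal $g_{\mc i\mc r_m}$ is present in the $\bbB_\mc i^k$-component of a letter of $w_\bbB$ exactly when $\mc i$ has value $\mc d_m$ there — and this is precisely how $\varphi_\bbB$ arises from $\varphi$ via the substitution $\mc i = \mc x_m \mapsto g_{\mc i\mc r_m}$, $\mc o = \mc x_m \mapsto g_{\mc o\mc r_m}$. The Boolean connectives and the temporal operators $\X$, $\U$ commute with the passage $w \mapsto w_\bbB$, using the identity $(w_{[j:]})_\bbB = (w_\bbB)_{[j:]}$ for all $j \in \bbN$; these inductive steps are routine. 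Together with the correctness of $A_\bbB$ as an NBW for $\varphi_\bbB$ (step 1 of conversion-1), the claim yields: $w \models \varphi[\mc x_m \from \mc d_m]$ iff $w_\bbB$ is accepted by $A_\bbB$.

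It then remains to connect acceptance of $A_\bbB$ to acceptance of $A$ as a $k$-register-guessing automaton, and $cond$ to the inequality set $E$. A path of $A$ on $w$ starts from a configuration $(q_0,\bar{\mc d}_0)$ with $\bar{\mc d}_0$ satisfying $\forall (\mc r_m,\mc r_n)\in E:\bar{\mc d}_0[m]\neq\bar{\mc d}_0[n]$, and since such automata never store, every configuration on the path keeps the vector $\bar{\mc d}_0$; hence the guard-letters read along the path are forced to be exactly $w_\bbB$ for the choice $\mc d_m = \bar{\mc d}_0[m]$, and the path is accepting iff this $w_\bbB$ is accepted by $A_\bbB$. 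Since $E$ is read off from $cond$ so that $\bar{\mc d}_0$ satisfies $E$ iff the substitution $\mc x_m \from \bar{\mc d}_0[m]$ makes every conjunct of $cond$ true, we conclude: $w \models A$ iff there is a $\bar{\mc d}_0$ satisfying $cond$ with $w_\bbB$ accepted by $A_\bbB$, iff (by the claim) there are $\mc d_1,\ldots,\mc d_k$ satisfying $cond$ with $w \models \varphi[\mc x_m \from \mc d_m]$, which is exactly $w \models \exists \mc x_1\ldots\mc x_k.cond.\varphi$.

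The only genuine work is the structural induction establishing the central claim; everything else is bookkeeping about the guess-once/no-store semantics and the syntactic translation $\varphi \mapsto \varphi_\bbB$. I expect the mild friction to be stating the identification $2^{P_\bbB} \simeq 2^P \x \bbB_\mc i^k \x \bbB_\mc o^k$ and the suffix identity $(w_\bbB)_{[j:]} = (w_{[j:]})_\bbB$ cleanly enough that the $\X$ and $\U$ cases of the induction go through without fuss.
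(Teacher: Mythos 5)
Your argument is correct and is exactly the standard decomposition one would expect here: a structural induction showing $w \models \varphi[\mc x_m \from \mc d_m]$ iff $w_\bbB \models \varphi_\bbB$, combined with the observation that, for a fixed admissible guess $\bar{\mc d}_0$, runs of $A$ on $w$ from $(q_0,\bar{\mc d}_0)$ are in one-to-one correspondence with runs of $A_\bbB$ on the induced $w_\bbB$ (your phrasing ``the path is accepting iff $w_\bbB$ is accepted'' should strictly read ``there is an accepting path from $(q_0,\bar{\mc d}_0)$ iff $w_\bbB$ is accepted,'' but that is cosmetic). The paper omits the proof of this observation entirely, so your write-up supplies precisely the missing content and matches the approach the surrounding construction clearly intends.
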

    \ak{restore proof}


  \subsection{Converting $\exists$LTL(EQ) into Register Automata} \label{sec:ltleq-to-reg-atm}
    
    In this section,
    we describe a sound but incomplete conversion of register-guessing automata
    into standard register automata.
    Together with conversion-1 from the previous section,
    this gives the conversion of $\exists$LTL(EQ) formulas into register automata.
    Note that no complete conversion of $\exists$LTL(EQ) formulas into register automata exists:
    for example,
    the formula $\exists \mc x. \G (\mc i \neq \mc x)$ has no equivalent register automaton\ak{how to prove?:)},
    although there is an equivalent register-guessing automaton.

    In automata, we will use the definition of $\delta$ that is symbolic instead of explicit,
    hence the transition functions of $k$-register-guessing automata and of $k$-register automata are of the form
    $
      Q \x 2^P \x G  \to  2^Q \text{~~and~~} Q \x 2^P \x G  \to  2^{Q\x\bbB^k},
    $
    (previously we had $\bbB_\mc i^k \x \bbB_\mc o^k$ instead of $G$),
    where $g \in G$ has the form
    $g = \true \| g \land g \| \mc i \sim \mc r \| \mc o \sim \mc r$
    where $\sim$ denotes $=$ or $\neq$, and $\mc r \in \mc R$.
    Using the symbolic definition rather than the explicit one is crucial
    in making our conversion more applicable.
    \ak{restore the prev sentence}
    \ak{what does this \emph{really} mean?}

    Given a $k$-register-guessing automaton $A = \VAtuple$,
    construct the $k$-register automaton $A' = \RAtuplePrime$
    ({\bf conversion-2}):
    \li
    \- $Q' = Q \times \bbB^k$.
       The Boolean component encodes, for every $\mc r_i \in \mc R$,
       whether the register $\mc r_i$ is assigned a value or not (ignoring the initial values).
       The initial state $q'_0 = (q_0,\false,...,\false)$.
       We call a register $\mc r_i$ with $b_i = \false$ \emph{uninitialized}.

    \- $F' = \{(q, b_1,...,b_k) \in Q' \| q \in F\}$.

    \- For every state $(q,b_1,...,b_k) \in Q'$ and
       $A$-transition $q \lblTo{(l, g)} q'$ ($l \in 2^P$, $g \in G$):
       \li
       \- If $g = \true$,
          then add to $\delta'$ the transition
          $
            (q,b_1,...,b_k) \lblTo{(l, g, \false^k)} (q',b_1,...,b_k).
          $

       \- Otherwise, do the following.
          \li
          \- Abort point:
             if there exists $i \in [k]$ such that $b_i = \false$ and $g$ contains $\mc i \neq \mc r_i$ or $\mc o \sim \mc r_i$,
             then abort.
             Because the register $\mc r_i$ is uninitalized ($b_i = \false$),
             we cannot know the valuation of $\mc i \neq \mc r_i$ or $\mc o \neq \mc r_i$.
             In contrast,
             if the guard $g$ contains $\mc i = \mc r_i$, we can assume that it holds and store $\mc i$ into $\mc r_i$
             (we cannot do this for $\mc o = \mc r_i$, because the automata do not allow for storing $\mc o$).

          \- Add to $\delta'$ the transition
             $
               (q,b_1,...,b_k) \lblTo{(l, g', a)} (q',b_1',...,b_k')
             $
             where for every $i \in [k]$:
             \li
   
             \- $b'_i=\true$ iff $b_i=\true$ or $g$ contains $\mc i = \mc r_i$.
   
             \- The action $a$ stores $\mc i$ into $\mc r_i$
                iff $g$ contains $\mc i = \mc r_i$ and $b_i=\false$.
   
             \- The guard $g'$ contains $\mc i \sim \mc r_i$
                iff $g$ contains $\mc i \sim \mc r_i$ and $b_i=\true$;
                similarly for $\mc o \sim \mc r_i$.
   
             \il

          \- Finally, we account for the inequality set $E$ and update $g'$ as follows.
             For every $(\mc r_i, \mc r_j) \in E$:
             if $b_i=\true$ and the action $a$ contains $\mc r_j = \mc i$,
             then add to $g'$ the expression $\mc i \neq r_i$.

             (Here we assume that the $A$-transition is not contradictory, namely, it is not the case that
              $\exists (\mc r_i, \mc r_j) \in E: b_i = \false \land b_j = \false \land (\mc i = \mc r_i) \in g \land (\mc i = \mc r_j) \in g$.
              Such transitions cannot be executed in $A$ and can be removed beforehand.)
          \il
       \il

       \- Note that the automaton $A'$ never compares $\mc i$ nor $\mc o$ with a register that was uninitialized.
          Therefore, the component $\mc d_0$ of $A'$ can be anything from $\mc D$.
    \il
    The automaton $A'$ has $|Q'| = |Q|\!\cdot\!2^{k}$,
    but the number of reachable states is $|Q|\cdot k$,
    because every $A'$-transition $(q,b_1,...,b_k) \lblTo{(l,g,a)} (q',b'_1,...,b'_k)$ satisfies $(b'_1,...,b'_k) \geq (b_1,...,b_k)$.

    An example of the conversion is in Figure~\ref{fig:reg-automaton-translated}.
    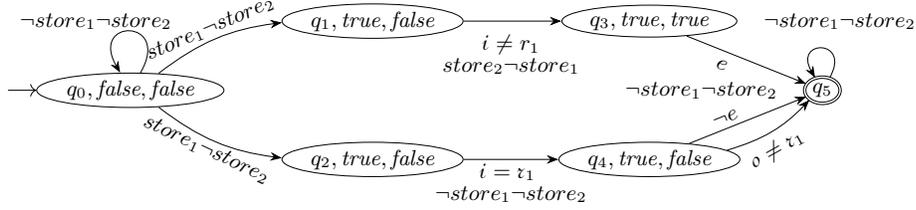
\begin{figure}[tb]
      \centering
      \scalebox{0.92}{\begin{tikzpicture}
	\begin{pgfonlayer}{nodelayer}
		\node [style=text ellipse, initial] (0) at (-3.5, 1) {$q_0, \false, \false$};
		\node [style=text ellipse] (1) at (0, 2) {$q_1, \true, \false$};
		\node [style=textual] (2) at (-4, 2) {$\neg store_1 \neg store_2$};
		\node [style=textual, rotate=23, yshift=-1.5mm, xshift=1mm] (3) at (-2.5, 2) {$store_1\neg store_2$};
		\node [style=text ellipse] (4) at (4, 2) {$q_3, \true, \true$};
		\node [double, style=text ellipse] (5) at (6.5, 1) {$q_5$};
		\node [style=text ellipse] (6) at (0, -0) {$q_2, \true, \false$};
		\node [style=text ellipse] (7) at (4, -0) {$q_4, \true, \false$};
		\node [xshift=1mm, yshift=-1mm, rotate=-25, style=textual] (8) at (-2.5, 0.25) {$store_1\neg store_2$};
		\node [style=textual] (9) at (2, 1.5) {\specialcellC{$\mc i \neq r_1$\vspace{-0.8mm}\\   $store_2 \neg store_1$}};
		\node [style=textual, yshift=-1mm] (10) at (2, -0.25) {\specialcellC{$\mc i = \mc r_1$  \vspace{-0.8mm}  \\  $\neg store_1 \neg store_2$}};
		\node [yshift=1mm, xshift=1mm, rotate=22, style=textual] (11) at (5, 0.5) {$\neg e$};
		\node [style=textual, rotate=-25, yshift=-1mm, xshift=1mm] (12) at (5, 1.5) {$e$};
		\node [xshift=1mm, yshift=-1mm, style=textual, rotate=29] (13) at (5.75, 0.25) {$\mc o \neq \mc r_1$};
		\node [style=textual] (14) at (4.75, 1) {$\neg store_1 \neg store_2$};
		\node [style=textual] (15) at (6.75, 2) {$\neg store_1 \neg store_2$};
	\end{pgfonlayer}
	\begin{pgfonlayer}{edgelayer}
		\draw [style=arrow, bend left=15, looseness=1.00] (0) to (1);
		\draw [style=arrow, in=120, out=60, loop] (0) to ();
		\draw [style=arrow, in=120, out=60, loop] (5) to ();
		\draw [style=arrow, bend right=15, looseness=1.00] (0) to (6);
		\draw [style=arrow] (6) to (7);
		\draw [style=arrow, bend right=15, looseness=1.00] (7) to (5);
		\draw [style=arrow] (1) to (4);
		\draw [style=arrow] (4) to (5);
		\draw [style=arrow] (7) to (5);
	\end{pgfonlayer}
\end{tikzpicture}}
      \caption{A 2-register automaton converted from the register-guessing automaton in Figure~\ref{fig:reg-guessing-automaton}.}
      \label{fig:reg-automaton-translated}
    \end{figure}

    \begin{obs}\label{obs:gue-reg-atm--reg-atm}
      Given a register-guessing automaton $A$.
      If conversion-2 succeeds and produces a register automaton $A'$,
      then $L(A) = L(A')$,
    \end{obs}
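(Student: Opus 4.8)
The plan is to prove the two inclusions $L(A')\subseteq L(A)$ and $L(A)\subseteq L(A')$ separately, in each case by transforming an accepting run of one automaton into an accepting run of the other. The guiding intuition is that a run of $A'$ merely \emph{postpones} the nondeterministic guessing that $A$ performs up front: instead of committing to an initial content for register $\mc r_i$, the automaton $A'$ keeps $\mc r_i$ uninitialized (the bit $b_i=\false$) and commits --- by storing the current $\mc i$ --- only at the first step where a guard of $A$ forces $\mc i=\mc r_i$. The role of the hypothesis ``conversion-2 succeeds'' is exactly that no state $(q,\bar b)$ of $A'$ has an outgoing transition (copied from an $A$-transition with guard $g$) for which $g$ mentions $\mc i\neq\mc r_i$ or $\mc o\sim\mc r_i$ with $b_i=\false$; hence along \emph{any} run of $A'$ the automaton never inspects the (arbitrary) content of an uninitialized register. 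The invariant I would maintain between a run of $A$ --- which has a constant register vector $\bar{\mc e}_0\in\mc D^k$, since $A$ never stores --- and the corresponding run of $A'$ is: the two runs visit the same sequence of $Q$-states, and at every step, for every $i$ with $b_i=\true$ in the $A'$-run, register $\mc r_i$ of $A'$ holds $\bar{\mc e}_0[i]$.

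For $L(A)\subseteq L(A')$: fix an accepting run of $A$ on a word $w$ with initial register vector $\bar{\mc e}_0$, which by definition satisfies $E$. Replay it in $A'$: start from all $b_i=\false$, and at each step take the copy (built by conversion-2) of the $A$-transition used, storing into $\mc r_i$ exactly when $A$'s guard at that step contains $\mc i=\mc r_i$ and $b_i$ was still $\false$. Since conversion-2 succeeded it did not abort, so each such copy is present in $\delta'$. An induction on the run prefix shows the invariant is preserved --- the first store into $\mc r_i$ happens on an input equal to $\bar{\mc e}_0[i]$, because at that step $A$'s guard asserts $\mc i=\mc r_i$ and $\mc r_i$ holds $\bar{\mc e}_0[i]$ in $A$ --- so every guard literal of $A'$ referring to an already-initialized register agrees with $w$, every literal $\mc i=\mc r_i$ with $b_i=\false$ is ``consumed'' by a store, and the extra literals $\mc i\neq\mc r_i$ that conversion-2 injects from $E$ hold because $\bar{\mc e}_0$ satisfies $E$. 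Since $F'=\{(q,\bar b)\mid q\in F\}$ and the $Q$-components of the two runs coincide, the constructed run of $A'$ is accepting, hence $w\in L(A')$.

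For $L(A')\subseteq L(A)$: fix an accepting run of $A'$ on $w$. Define $\bar{\mc e}_0\in\mc D^k$ by letting $\bar{\mc e}_0[i]$ be the value first stored into $\mc r_i$ if $\mc r_i$ is ever initialized along this run, and otherwise a value chosen distinct from all stored values and from the other such choices --- possible since $\mc D$ is infinite. This vector satisfies $E$: for $(\mc r_i,\mc r_j)\in E$ with both registers eventually initialized, say $\mc r_i$ strictly before $\mc r_j$ (they cannot be initialized simultaneously, since conversion-2 removes such contradictory transitions), the transition that stores $\mc r_j$ has $b_i=\true$, so conversion-2 added $\mc i\neq\mc r_i$ to its guard; the $A'$-run satisfies that literal, giving $\bar{\mc e}_0[i]\neq\bar{\mc e}_0[j]$, while the remaining pairs of $E$ are separated by the freshness of the chosen values. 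Now run $A$ on $w$ through the same sequence of $Q$-states with register vector $\bar{\mc e}_0$: this is a legal run, since by the invariant every comparison $A$ makes against an already-initialized register matches $A'$ and hence $w$; a literal $\mc i=\mc r_i$ of $A$ with $b_i=\false$ holds because $A'$ stores there and $\bar{\mc e}_0[i]$ was \emph{defined} as that stored value; and no literal $\mc i\neq\mc r_i$ or $\mc o\sim\mc r_i$ with $b_i=\false$ can occur in $A$'s guard along this run, as conversion-2 would otherwise have aborted. Acceptance transfers as before, so $w\in L(A)$.

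I expect the delicate part to be the bookkeeping around the inequality set $E$ together with the lazy-initialization bits: one must check that the guard literals conversion-2 injects for $E$ are \emph{exactly} right --- strong enough that the $\bar{\mc e}_0$ reconstructed in the $\subseteq$ direction genuinely satisfies $E$, yet weak enough that the $\supseteq$ direction can replay an arbitrary run of $A$ --- and that the ``fresh'' values for never-initialized registers can be chosen in a single globally consistent way. Everything else is a routine simultaneous induction on run prefixes once the ``postponed guessing'' invariant is stated precisely.
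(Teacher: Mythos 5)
Your proof is correct and follows essentially the same route as the paper: a step-by-step simulation between runs of $A$ and $A'$, maintaining the invariant that an initialized register of $A'$ holds the value that the corresponding register of $A$ was guessed to contain. You actually supply more detail than the paper does --- in particular the explicit reconstruction of the guessed vector $\bar{\mc e}_0$ from the first-stored values in the $L(A')\subseteq L(A)$ direction, and the verification that the guard literals injected for the inequality set $E$ are exactly what is needed --- where the paper only writes that this direction is ``similar.''
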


    \begin{proof}[Proof]
      We need to prove that
      $\forall w \in (2^P \x \mc D^2)^\omega: w \models A \Iff w \models A'$.

      Consider the direction $\Impl$.
      The acceptance $w \models A$ means that
      there exists $\mc R_0 \in \mc D^k$ and an accepting data-path $p$ starting in configuration $(q_0, \mc R_0)$
      and corresponding to $p$ the accepting Boolean path $p_\bbB$:
      \begin{align*}
             &p = (q_0, \mc R_0) \lblTo{(l_0, \mc i_0, \mc o_0)} (q_1, \mc R_0) \lblTo{(l_1, \mc i_1, \mc o_1)} ... \\
        &p_\bbB = q_0 \lblTo{(l_0, g_0)} q_1 \lblTo{(l_1, g_1)} ...
      \end{align*}
      (for every $j \in \bbN_0$:
       $l_j \in 2^P$,
       ${\mc i}_j \in \mc D$,
       ${\mc o}_j \in \mc D$, and
       $g_j \in G$).
      We build inductively the accepting data-path $p'$ of $A'$ (and corresponding to $p'$ the Boolean path $p'_\bbB$):
      \begin{align*}
        &p' = \big((q_0, B_0), \mc R'_0\big) \lblTo{(l_0, \mc i_0, \mc o_0, a'_0)} \big((q_1, B_1), \mc R'_1\big) \lblTo{(l_1, \mc i_1, \mc o_1, a'_1)} ... \\
        &p'_\bbB = (q_0, B_0) \lblTo{(l_0, g'_0, a'_0)} (q_1, B_1) \lblTo{(l_1, g'_1, a'_1)} ...
      \end{align*}
      ($\forall j \in \bbN_0$:
       $B_j \in \bbB^k$, $a'_j \in \bbB^k$, $\mc R_j \in \mc D^k$, and $\mc i_j$, $\mc o_j$, $l_j$ are as for $p$) as follows.
      \li
      \- The path $p'$ starts with $B_0 = \false^k$ and $\mc R_0 = \mc d_0^k$.
      \- The construction of $A'$ uniquely defines $a'_j$, $g'_j$, and $B_{j+1}$
         from $B_j$ and $g_j$.
      \- The value of $\mc R'_{j+1}$ is uniquely defined by $\mc R'_j$, $a'_j$, and $\mc i$.
      \- By the construction, every $g'_j$ is less restricting or equal to $g_j$,
         and we never compare $\mc i$ or $\mc o$ with uninitialized registers.
         Because $(\mc i_j, \mc o_j, \mc R_0) \models g_j$,
         we have that $(\mc i_j, \mc o_j, \mc R'_j) \models g'_j$, for every $j \in \bbN_0$.
         Hence, every transition of $p'$ is indeed a transition of $A'$, and $p'$ is indeed a path of $A'$.
      \il

      The direction $\Leftarrow$ is similar to the above.
      The data-path $p$ and corresponding to $p$ the Boolean path $p_\bbB$ of $A$
      are uniquely constructed from a given data-path $p'$ and corresponding to $p'$ the Boolean path $p'_\bbB$ of $A'$.
      When proving that $p$ is indeed a path of $A$,
      we use the property of $A'$ that it never compares $\mc i$ nor $\mc o$ with a register whose value was not written before.
    \end{proof}

    Combined together, the conversions give us the following.
    \begin{theorem}
      Given an $\exists$LTL(EQ) $\Phi = \exists \mc x_1,...,\mc x_k.cond.\varphi$.
      If conversion-1 and conversion-2 succeed and result in a register automaton $A$, then $L(\Phi) = L(A')$.
    \end{theorem}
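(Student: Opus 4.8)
The plan is to obtain the theorem as a direct transitive composition of the two correctness statements already established for the two conversions, so there is essentially no new content to prove. First I would let $A$ denote the $k$-register-guessing automaton produced from $\Phi = \exists \mc x_1 ... \mc x_k.\, cond.\, \varphi$ by conversion-1. Conversion-1 never aborts: it only translates the LTL formula $\varphi_\bbB$ into an NBW by standard means and reads the inequality set $E$ off $cond$, neither of which has a side condition. Hence Observation~\ref{obs:efoltl--gue-reg-atm} applies verbatim and gives, for every $w \in (2^P \x \mc D^2)^\omega$, that $w \models A \Iff w \models \Phi$, i.e.\ $L(A) = L(\Phi)$.

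Next I would feed this $A$ into conversion-2. By the hypothesis of the theorem, conversion-2 succeeds on $A$ and yields the register automaton $A'$. Observation~\ref{obs:gue-reg-atm--reg-atm} then gives $L(A) = L(A')$. Here I would only note, as a trivial sanity check, that both observations speak about words over the same alphabet $(2^P \x \mc D^2)^\omega$, so the two language equalities can be composed without any domain reconciliation. Chaining them yields $L(\Phi) = L(A) = L(A')$, which is exactly the claim.

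The only place where the argument could in principle break is the abort point of conversion-2 — a guard comparing $\mc i$ or $\mc o$ with a still-uninitialized register, or a guard contradictory with respect to $E$ — and this is precisely what the ``if conversion-2 succeeds'' hypothesis excludes; conversion-1, being sound and complete, introduces no further side conditions. So I do not expect any real obstacle: the substantive work is already carried out in Observations~\ref{obs:efoltl--gue-reg-atm} and~\ref{obs:gue-reg-atm--reg-atm}, and this theorem is simply their composition.
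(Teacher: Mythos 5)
Your proposal is correct and matches the paper exactly: the paper offers no separate argument beyond the sentence ``Combined together, the conversions give us the following,'' i.e.\ the theorem is obtained precisely by chaining Observation~\ref{obs:efoltl--gue-reg-atm} ($L(\Phi)=L(A)$) with Observation~\ref{obs:gue-reg-atm--reg-atm} ($L(A)=L(A')$ when conversion-2 succeeds), just as you do.
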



\section{Conclusion} \label{sec:conclusion}
  \ak{mention Thomas's paper on games}
  \ak{mention Ranko's theorem in that paper that also uses abstrac classes}

  In this paper we introduced a sound and complete approach to synthesis of register transducers from specifications given as register automata.
  Although we focused on automata with the co-B\"uchi acceptance, others (e.g., parity) looks doable.
  The approach works (incompletely) for specifications given as quantified temporal logic formulas,
  by converting them into register automata.
  In particular,
  we investigated the two directions---richer automata and suitable temporal logic---raised by Ehlers et al.\cite[Sect.6]{ehlers-register-machines}.

  We are working on extending the approach to automata with guards that, in addition to $=$, have operators $>$, $+$,
  and on the question of decidability of the unbounded-but-finite synthesis problem that is open.
  It would be interesting to combine our approach with the approach to synthesis of reactive programs~\cite{2018arXiv180709047G}.
  It would also be interesting to do a synthesis case study, possibly for specifications with costs.

\ack{We thank Orna Kupferman for comments on the early draft and helpful discussions.
  This work was supported by the Austrian Science Fund (FWF) under the RiSE
  National Research Network (S11406) and by the Hebrew University.%
}

\vspace{-3.5mm}
\bibliographystyle{paper-includes/splncs03}
\bibliography{refs}

\begin{thebibliography}{10}
\providecommand{\url}[1]{\texttt{#1}}
\providecommand{\urlprefix}{URL }

\bibitem{smtlib}
Barrett, C., Fontaine, P., Tinelli, C.: {The Satisfiability Modulo Theories
  Library (SMT-LIB)} (2016), \url{www.SMT-LIB.org}

\bibitem{berend2010improved}
Berend, D., Tassa, T.: Improved bounds on bell numbers and on moments of sums
  of random variables. Probability and Mathematical Statistics  30(2),
  185--205 (2010)

\bibitem{Church63}
Church, A.: Logic, arithmetic, and automata. In: International Congress of
  Mathematicians (Stockholm, 1962), pp. 23--35. Institute Mittag-Leffler,
  Djursholm (1963)

\bibitem{Demri:2012:TLR:2400051.2400054}
Demri, S., D'Souza, D., Gascon, R.: Temporal logics of repeating values. J.
  Log. and Comput.  22(5),  1059--1096 (Oct 2012),
  \url{http://dx.doi.org/10.1093/logcom/exr013}

\bibitem{LTLfreeze}
Demri, S., Lazi\'{c}, R.: Ltl with the freeze quantifier and register automata.
  ACM Trans. Comput. Logic  10(3),  16:1--16:30 (Apr 2009),
  \url{http://doi.acm.org/10.1145/1507244.1507246}

\bibitem{ehlers-register-machines}
Ehlers, R., Seshia, S.A., Kress-Gazit, H.: Synthesis with identifiers. In:
  International Conference on Verification, Model Checking, and Abstract
  Interpretation. pp. 415--433. Springer (2014)

\bibitem{BS}
Finkbeiner, B., Schewe, S.: Bounded synthesis. {STTT}  15(5-6),  519--539
  (2013)

\bibitem{10.1007/978-3-319-57288-8_1}
Frenkel, H., Grumberg, O., Sheinvald, S.: An automata-theoretic approach to
  modeling systems and specifications over infinite data. In: Barrett, C.,
  Davies, M., Kahsai, T. (eds.) NASA Formal Methods. pp. 1--18. Springer (2017)

\bibitem{2018arXiv180709047G}
{Gerstacker}, C., {Klein}, F., {Finkbeiner}, B.: {Bounded Synthesis of Reactive
  Programs}. ArXiv e-prints  (Jul 2018), to appear at ATVA'18

\bibitem{grumberg2010variable}
Grumberg, O., Kupferman, O., Sheinvald, S.: Variable automata over infinite
  alphabets. In: International Conference on Language and Automata Theory and
  Applications. pp. 561--572. Springer (2010)

\bibitem{grumberg2012model}
Grumberg, O., Kupferman, O., Sheinvald, S.: Model checking systems and
  specifications with parameterized atomic propositions. In: International
  Symposium on Automated Technology for Verification and Analysis. pp.
  122--136. Springer (2012)

\bibitem{Kaminski}
Kaminski, M., Francez, N.: Finite-memory automata. Theoretical Computer Science
   134(2),  329 -- 363 (1994),
  \url{http://www.sciencedirect.com/science/article/pii/0304397594902429}

\bibitem{KV97c}
Kupferman, O., Vardi, M.: Synthesis with incomplete informatio. In: 2nd
  International Conference on Temporal Logic. pp. 91--106. Manchester (July
  1997)

\bibitem{10.1007/3-540-44618-4_41}
Lazi{\'{c}}, R., Nowak, D.: A unifying approach to data-independence. In:
  Palamidessi, C. (ed.) CONCUR 2000 --- Concurrency Theory. pp. 581--596.
  Springer Berlin Heidelberg, Berlin, Heidelberg (2000)

\bibitem{DBLP:conf/popl/PnueliR89}
Pnueli, A., Rosner, R.: On the synthesis of a reactive module. In: Conference
  Record of the Sixteenth Annual {ACM} Symposium on Principles of Programming
  Languages, Austin, Texas, USA, January 11-13, 1989. pp. 179--190. {ACM} Press
  (1989), \url{http://doi.acm.org/10.1145/75277.75293}

\bibitem{bell-number}
{Wikipedia contributors}: Bell number --- {Wikipedia}{,} the free encyclopedia.
  \url{https://en.wikipedia.org/w/index.php?title=Bell_number&oldid=832584649}
  (2018), [Online; accessed 8-August-2018]

\bibitem{Wolper-data-indep}
Wolper, P.: Expressing interesting properties of programs in propositional
  temporal logic. In: Proceedings of the 13th POPL. pp. 184--193. ACM, New
  York, NY, USA (1986), \url{http://doi.acm.org/10.1145/512644.512661}

\end{thebibliography}

\appendix
\section{Change History}
\li
\- Aug 19, 2018: fixed typos in Section~\ref{sec:AV} (it incorrectly used $\ATV$ instead of $A_\bbB @ V_k$).
\- Aug 14, 2018: fixed a small bug in Section~\ref{sec:ATW}.
\- Aug 9, 2018: rewriting into using universal instead of non-deterministic automata.
   Added complexity result.
\- Aug 2, 2018: the extended version of the final version submitted to ATVA.
\il

\end{document}